\newtheorem{remark}{Remark}
\newcommand{\myomit}[1]{{}}
\begin{document}

\title{The Complexity of Partial-observation Stochastic Parity Games With Finite-memory Strategies\thanks{
This research was supported by Austrian Science Fund (FWF) Grant No P23499- N23, FWF NFN Grant No S11407-N23 (RiSE), ERC Start grant (279307: Graph Games),
Microsoft Faculty Fellowship Award, 
NSF grants CNS 1049862 and CCF-1139011,
by NSF Expeditions in Computing project "ExCAPE: Expeditions in Computer 
Augmented Program Engineering", by BSF grant 9800096, and by gift from 
Intel.}
}
\author{
Krishnendu Chatterjee$^\dag$ \quad  Laurent Doyen$^{\S}$ \quad  Sumit Nain$^{\ddag}$ \quad Moshe Y. Vardi$^{\ddag}$ \\ 
\normalsize
  $\strut^\dag$ IST Austria \\
\normalsize  $\strut^\S$ CNRS, LSV, ENS Cachan \\
\normalsize
  $\strut^\ddag$ Rice University, USA 
}

\date{}
\maketitle

\begin{abstract}
We consider two-player partial-observation stochastic games on finite-state 
graphs where player~1 has partial observation and player~2 has perfect 
observation. 
%%(complete knowledge).
The winning condition we study are $\omega$-regular conditions specified as 
parity objectives.
The qualitative-analysis problem given a partial-observation stochastic game 
and a parity objective asks whether there is a strategy to ensure that the 
objective is satisfied with probability~1 (resp. positive probability).
These qualitative-analysis problems are known to be undecidable.
However in many applications the relevant question is the existence of finite-memory
strategies, and the qualitative-analysis problems under finite-memory 
strategies was recently shown to be decidable in 2EXPTIME.
We improve the complexity and show that the qualitative-analysis problems for 
partial-observation stochastic parity games under finite-memory strategies are 
EXPTIME-complete; and also establish optimal (exponential) memory bounds for 
finite-memory strategies required for qualitative analysis. 
\end{abstract}

%\vspace{-1.5em}
\section{Introduction}
%\vspace{-0.5em}

\noindent{\em Games on graphs.}
Two-player stochastic games on finite graphs played for infinite rounds 
is central in many areas of computer science as they provide a natural setting 
to model nondeterminism and reactivity in the presence of randomness. 
In particular, infinite-duration games with omega-regular objectives are a 
fundamental tool in the analysis of many aspects of reactive systems  such as 
modeling, verification, refinement, and synthesis~\cite{AHK02,McN93}. 
For example, the standard approach to the synthesis problem for reactive 
systems reduces the problem to finding the winning strategy of a suitable 
game~\cite{PR89a}.
The most common approach to games assumes a setting with perfect information, 
where both players have complete knowledge of the state of the game.
In many settings, however, the assumption of perfect information is not valid and 
it is natural to allow an information asymmetry between the players,
such as, controllers with noisy sensors and software modules that expose partial 
interfaces~\cite{Reif84}.

\smallskip\noindent{\em Partial-observation stochastic games.}
Partial-observation stochastic games are played between two players (player~1 and player~2) 
on a graph with finite state space.
The game is played for infinitely many rounds where in each round either 
player~1 chooses a move or player~2 chooses a move, and the successor state is 
determined by a probabilistic transition function. 
Player~1 has partial observation where the state space is partitioned according to observations 
that she can observe i.e., given the current state, the player can only view the observation 
of the state (the partition the state belongs to), but not the precise state.
Player~2, the adversary to player~1, has perfect observation and can observe the precise state.

\smallskip\noindent{\em The class of $\omega$-regular objectives.}
An objective specifies the desired set of behaviors (or paths) for player~1. 
In verification and control of stochastic systems an objective is 
typically an $\omega$-regular set of paths. 
The class of $\omega$-regular languages extends classical regular languages to 
infinite strings, and provides a robust specification language to express
all commonly used specifications~\cite{Thomas97}.
In a parity objective, every state of the game is mapped to a non-negative 
integer priority and the goal is to ensure that the minimum priority visited 
infinitely often is even.
Parity objectives are a canonical way to define such $\omega$-regular 
specifications.
Thus partial-observation stochastic games with parity objective provide a 
general framework for analysis of stochastic reactive systems.

\smallskip\noindent{\em Qualitative and quantitative analysis.} 
Given a partial-observation stochastic game with a parity objective and a 
start state, the \emph{qualitative-analysis} problem asks whether the 
objective can be ensured with probability~1 (\emph{almost-sure winning}) or 
positive probability (\emph{positive winning}); whereas the more general 
\emph{quantitative-analysis} problem asks whether the objective can be 
satisfied with probability at least $\lambda$ for a given threshold 
$\lambda \in (0,1)$.

\smallskip\noindent{\em Previous results.}
The quantitative analysis problem for partial-observation stochastic games with parity objectives
is undecidable, even for the very special case of probabilistic automata with 
reachability objectives~\cite{Paz71}.
The qualitative-analysis problems for  partial-observation stochastic games with parity objectives 
are also undecidable~\cite{BBG08}, even for probabilistic automata.
In many practical applications, however, the more relevant question is the 
existence of finite-memory strategies.
The quantitative analysis problem remains undecidable for finite-memory 
strategies, even for probabilistic automata~\cite{Paz71}.
The qualitative-analysis problems for partial-observation stochastic parity 
games were shown to be decidable with 2EXPTIME complexity for finite-memory 
strategies~\cite{NV13};
and the exact complexity of the problems was open which we settle in this work.

\smallskip\noindent{\em Our contributions.}
Our contributions are as follows: for the qualitative-analysis problems for 
partial-observation stochastic parity games under finite-memory strategies we 
show that
(i)~the problems are EXPTIME-complete; and 
(ii)~if there is a finite-memory almost-sure (resp. positive) winning strategy,
then there is a strategy that uses at most exponential memory (matching the 
exponential lower bound known for the simpler case of reachability and safety 
objectives).
Thus we establish both optimal computational and strategy complexity results.
Moreover, once a finite-memory strategy is fixed for player~1, we obtain a finite-state 
perfect-information Markov decision process (MDP) for player~2 where finite-memory 
is as powerful as infinite-memory~\cite{CY95}.
Thus our results apply to both cases where player~2 has infinite-memory or restricted
to finite-memory  strategies.

\smallskip\noindent{\em Technical contribution.}
The 2EXPTIME upper bound of~\cite{NV13} is achieved via a reduction to the emptiness 
problem of alternating parity tree automata.
The reduction of~\cite{NV13} to alternating tree automata is exponential as it requires 
enumeration of the end components and recurrent classes that can arise after fixing strategies.
We present a polynomial reduction, which is achieved in two steps.
The first step is as follows: a~\emph{local gadget-based} reduction (that transforms every 
probabilistic state to a local gadget of deterministic states) for perfect-observation 
stochastic games to perfect-observation deterministic games for parity objectives was
presented in~\cite{CJH03,Cha-Thesis}.
This gadget, however, requires perfect observation for both players.
We extend this reduction and present a local gadget-based polynomial reduction of 
partial-observation stochastic games to three-player partial-observation deterministic 
games, where player~1 has partial observation, the other two players have perfect observation, 
and player~3 is helpful to player~1.
The crux of the proof is to show that the local reduction allows to infer properties about
recurrent classes and end components (which are global properties).
In the second step we present a polynomial reduction of the three-player games problem 
to the emptiness problem of alternating tree automata. 
We also remark that the new model of three-player games we introduce for the intermediate 
step of the reduction maybe also of independent interest for modeling of other applications.

\smallskip\noindent{\em Related works.}
The undecidability of the qualitative-analysis problem for partial-observation stochastic 
parity games with infinite-memory strategies follows from~\cite{BBG08}.
For partially observable Markov decision processes (POMDPs), which is a special case of 
partial-observation stochastic games where player~2 does not have any choices, the qualitative-analysis
problem for parity objectives with finite-memory strategies was shown to be EXPTIME-complete~\cite{CCT13}.
For partial-observation stochastic games the almost-sure winning problem was shown to be 
EXPTIME-complete for B\"uchi objectives (both for finite-memory and infinite-memory strategies)~\cite{CDHR07,CD12}.
Finally, for partial-observation stochastic parity games the almost-sure winning problem
under finite-memory strategies was shown to be decidable in 2EXPTIME in~\cite{NV13}.

\smallskip\noindent{\em Summary and discussion.} 
The results for the qualitative analysis of various models of partial-observation 
stochastic parity games with finite-memory strategies for player~1 is summarized
in Table~\ref{tab:complexity}.
We explain the results of the table.
The results of the first row follows from~\cite{CCT13} and the results for the second
row are the results of our contributions.
In the most general case both players have partial observation~\cite{BGG09}.
If we consider partial-observation stochastic games where both players have partial observation,
then the results of the table are derived as follows:
(a)~If we consider infinite-memory strategies for player~2, then the problem remains undecidable
as when player~1 is non-existent we obtain POMDPs as a special case.
The non-elementary lower bound follows from the results of~\cite{CD12} where the lower
bound was shown for reachability objectives where finite-memory strategies suffice for 
player~1 (against both finite and infinite-memory strategies for player~2).
(b)~If we consider finite-memory strategies for player~2, then the decidability of
the problem is open, but we obtain the non-elementary lower bound on memory from the
results of~\cite{CD12} for reachability objectives.

\begin{table}[h]
\begin{center}
\begin{tabular}{|c|c|c|}
\hline
Game Models & Complexity & Memory bounds\\
\hline
\hline
POMDPs & \ EXPTIME-complete~\cite{CCT13} \ & Exponential~\cite{CCT13} \\
\hline
Player~1 partial and player~2 perfect & {\bf EXPTIME-complete} & {\bf Exponential} \\
\ (finite- or infinite-memory for player~2) \ & & \\
\hline
Both players partial  & Undecidable~\cite{BBG08} & \ Non-elementary~\cite{CD12} \  \\
infinite-memory for player~2 &  & (Lower bound) \\
\hline
Both players partial & Open (??) & Non-elementary~\cite{CD12} \\
finite-memory for player~2 &  & (Lower bound) \\
\hline
\hline
\end{tabular}
\end{center}
\caption{Complexity and memory bounds for qualitative analysis of partial-observation stochastic 
parity games with finite-memory strategies for player~1. The new results are boldfaced.}\label{tab:complexity}
\end{table}

%%BGG09.

\section{Partial-observation Stochastic Parity Games}\label{sec:partial_stoch}
We consider partial-observation stochastic parity games
where player~1 has partial observation and player~2 has perfect 
observation.
We consider parity objectives, and for almost-sure winning
under finite-memory strategies for player~1 present a polynomial reduction 
to sure winning in three-player parity games where player~1 has 
partial observation, player~3 has perfect observation and is helpful 
towards player~1, and player~2 has perfect observation and is 
adversarial to player~1.
A similar reduction also works for positive winning.
We then show how to solve the  sure-winning problem for 
three-player games using alternating parity tree automata.
Thus the steps are as follows:
\begin{enumerate}

\item Reduction of partial-observation stochastic parity games 
for almost-sure winning with finite-memory strategies to 
three-player parity games sure-winning problem (with player~1 partial, other 
two perfect, player~1 and player~3 existential, and player~2 adversarial).

\item Solving the sure winning problem for three-player parity games 
using alternating parity tree automata.

\end{enumerate}
In this section we present the details of the first step.
The second step is given in the following section.

\subsection{Basic definitions} 
We start with basic definitions related to partial-observation stochastic 
parity games.

\smallskip\noindent{\em Partial-observation stochastic games.}
We consider slightly different notation (though equivalent) to 
the classical definitions, but the slightly different notation helps for
more elegant and explicit reduction.
We consider partial-observation stochastic games as a tuple 
$G=(\VA,\VB,\VR, A_1, \trans, E, \Obs,\obs)$ as follows:
$S=\VA \cup \VB \cup \VR$ is the state space partitioned into player-1 states ($\VA$),
player-2 states ($\VB$), and probabilistic states ($\VR$);
and $A_1$ is a finite set of actions for player~1.
Since player~2 has perfect observation, she chooses edges instead
of actions.
The transition function is as follows: 
$\trans: \VA \times A_1\to S_2$ that given a player-1 state in $\VA$ 
and an action in $A_1$ gives the next state in $\VB$ (which belongs to 
player~2); and $\trans:\VR \to \dist(\VA)$ given a probabilistic state
gives the probability distribution over the set of player-1 
states.
The set of edges is as follows:
$E= \set{(s,t) \mid s \in \VR, t \in \VA, \trans(s)(t) > 0} \cup E'$,
where $E' \subseteq \VB \times \VR$.
The observation set $\Obs$ and observation mapping $\obs$ are standard,
i.e., $\obs:S \to \Obs$.
Note that player~1 plays after every three steps (every move of 
player~1 is followed by a move of player~2, then a probabilistic
choice).
In other words, first player~1 chooses an action, then player~2 chooses
an edge, and then there is a probability distribution over states 
where player~1 again chooses and so on.

\smallskip\noindent{\em Three-player non-stochastic turn-based games.}
We consider three-player partial-observation (non-stochastic turn-based) games 
as a tuple $G=(\VA,\VB,\VC, A_1, \trans, E, \Obs,\obs)$ as follows:
$S$ is the state space partitioned into player-1 states ($\VA$),
player-2 states ($\VB$), and player-3 states ($\VC$);
and $A_1$ is a finite set of actions for player~1.
The transition function is as follows: 
$\trans: \VA \times A_1\to S_2$ that given a player-1 state in $\VA$ 
and an action in $A_1$ gives the next state (which belongs to player~2).
The set of edges is as follows:
$E \subseteq (\VB \cup \VC) \times S$.
Hence in these games player~1 chooses an action, and the other 
players have perfect observation and choose edges.
We only consider the sub-class where player~1 plays in 
every $k$-steps, for a fixed $k$.
The observation set $\Obs$ and observation mapping $\obs$ are again 
standard.

\smallskip\noindent{\em Plays and strategies.}
A \emph{play} in a partial-observation stochastic game is an infinite sequence of states 
$s_0 s_1 s_2 \ldots$ such that the following conditions hold for all $i \geq 0$: 
(i)~if $s_i \in \VA$, then there exists $a_i \in A_1$ such that $s_{i+1}=\trans(s_i,a_i)$;
and (ii)~if $s_i \in (\VB \cup \VR)$, then $(s_i,s_{i+1}) \in E$.
The function $\obs$ is extended to sequences $\rho = s_0 \dots s_n$ of states 
in the natural way, namely $\obs(\rho) = \obs(s_0) \dots \obs(s_n)$.
A strategy for a player is a recipe to extend the prefix of a play.
Formally, player-1 strategies are functions $\straa: S^* \cdot \VA \to A_1$; and 
player-2 (and analogously player-3 strategies) are functions: 
$\strab: S^* \cdot \VB \to S$
such that for all $w \in S^*$ and $s \in \VB$ we have $(s,\strab(w \cdot s)) \in E$.
We consider only observation-based strategies for player~1, i.e., 
for two play prefixes $\rho$ and $\rho'$ if the corresponding observation sequences match
($\obs(\rho)=\obs(\rho')$), then the strategy must choose the same action 
($\straa(\rho)=\straa(\rho')$); and the other players have all strategies.
The notations for three-player games are similar.

\smallskip\noindent{\em Finite-memory strategies.}
A player-1 strategy uses \emph{finite-memory} if it can be encoded
by a deterministic transducer $\tuple{\mem, m_0, \straa_u, \straa_n}$
where $\mem$ is a finite set (the memory of the strategy), 
$m_0 \in \mem$ is the initial memory value,
$\straa_u: \mem \times \Obs  \to \mem$ is the memory-update function, and 
$\straa_n: \mem \to A_1$ is the next-move function. 
The \emph{size} of the strategy is the number $\abs{\mem}$ of memory values.
If the current observation is $o$, and the current memory value is $m$,
then the strategy chooses the next action 
%according to the probability distribution 
$\straa_n(m)$,
and the memory is updated to $\straa_u(m,o)$. 
Formally, $\tuple{\mem, m_0, \straa_u, \straa_n}$
defines the strategy $\straa$ such that 
$\straa(\rho\cdot s)=\straa_n(\widehat{\straa}_u(m_0, \obs(\rho)\cdot \obs(s))$
for all $\rho \in S^*$ and $s \in \VA$, where $\widehat{\straa}_u$ extends 
$\straa_u$ to sequences of observations as expected. 
This definition extends to infinite-memory strategies by dropping the 
assumption that the set $\mem$ is finite.

\smallskip\noindent{\em Parity objectives.}
An \emph{objective} for Player~$1$ in $G$ is a set $\varphi \subseteq S^\omega$ of 
infinite sequences of states. 
A play $\rho$ \emph{satisfies} the objective $\varphi$ if $\rho \in \varphi$.
For a play $\rho=s_0 s_1 \ldots$ we denote by $\Inf(\rho)$ the set of states 
that occur infinitely often in $\rho$, that is, 
$\Inf(\rho)=\{ s \mid s_j=s \text{ for infinitely many } j \text{'s}  \}$.
For $d \in \Nats$, let $p:S \to \{0,1,\ldots,d\}$ be a 
\emph{priority function}, 
which maps each state to a nonnegative integer priority.
The \emph{parity} objective $\Parity(p)$ requires that the minimum priority 
that occurs infinitely often be even.
Formally, $\Parity(p)=\set{\rho \mid \min\set{ p(s) \mid s \in \Inf(\rho)} 
\mbox{ is even} }$.
Parity objectives are a canonical way to express $\omega$-regular objectives~\cite{Thomas97}.

\smallskip\noindent{\em Almost-sure winning and positive winning.}
An \emph{event} is a measurable set of plays.
For a partial-observation stochastic game, given strategies $\straa$ and $\strab$ 
for the two players, the probabilities of events are uniquely defined~\cite{Var85}. 
For a parity objective~$\Parity(p)$, we denote by $\Prb_{s}^{\straa,\strab}(\Parity(p))$ 
the probability that $\Parity(p)$ is satisfied by the play obtained from the starting state $s$ 
when the strategies $\straa$ and $\strab$ are used.
The \emph{almost-sure} (resp. \emph{positive}) winning problem under finite-memory strategies 
asks, given a partial-observation stochastic game, a parity objective $\Parity(p)$, and a 
starting state $s$,  whether there exists a finite-memory observation-based 
strategy $\straa$ for player~1 such that against all strategies $\strab$ for player~2 
we have  $\Prb_{s}^{\straa,\strab}(\Parity(p))=1$ (resp. $\Prb_{s}^{\straa,\strab}(\Parity(p))>0$).
The almost-sure and positive winning problems are also referred to as the qualitative-analysis problems
for stochastic games.

\smallskip\noindent{\em Sure winning in three-player games.}
In three-player games once the starting state $s$ and strategies $\straa, \strab$, and $\strac$
of the three players are fixed we obtain a unique play, which we denote as $\rho_s^{\straa,\strab,\strac}$. 
In three-player games we consider the following \emph{sure} winning problem:
given a parity objective $\Parity(p)$, sure winning is ensured if 
there exists a finite-memory observation-based strategy $\straa$ for 
player~1, such that in the two-player perfect-observation game %$G_\straa$ 
obtained after fixing $\straa$, player~3 can ensure the parity 
objective against all strategies of player~2. 
Formally, the sure winning problem asks whether there exist a finite-memory observation-based 
strategy $\straa$ for player~1 and a strategy $\strac$ for player~3, such that for all 
strategies $\strab$ for player~2 we have  $\rho_s^{\straa,\strab,\strac} \in \Parity(p)$.

\begin{remark}[Equivalence with standard model]
We remark that for the model of partial-observation stochastic games studied in literature
the two players simultaneously choose actions, and a probabilistic transition function 
determine the probability distribution of the next state. 
In our model, the game is turn-based and the probability distribution is chosen only in 
probabilistic states.
However, it follows from the results of~\cite{CDGH10} that the models are equivalent:
by the results of~\cite[Section~3.1]{CDGH10} the interaction of the players and probability 
can be separated without loss of generality; and~\cite[Theorem~4]{CDGH10} shows that in presence of 
partial observation, concurrent games can be reduced to turn-based games in polynomial time.
Thus the turn-based model where the moves of the players and stochastic interaction 
are separated is equivalent to the standard model.
Moreover, for a perfect-information player choosing an action is equivalent to choosing 
an edge in a turn-based game. 
Thus the model we consider is equivalent to the standard partial-observation game models.
\end{remark}

\begin{remark}[Pure and randomized strategies]\label{remark:strategies}
In this work we only consider pure strategies. 
In partial-observation games, randomized strategies are also relevant as they are more 
powerful than pure strategies. 
However, for finite-memory strategies the almost-sure and positive winning problem 
for randomized strategies can be reduced in polynomial time to the problem 
for finite-memory pure strategies~\cite{CD12,NV13}.
%%% KRISH LATER MAKE MORE PRECISE CITATION.
Hence without loss of generality we only consider pure strategies.
\end{remark}

\subsection{Reduction of partial-observation stochastic games to three-player games}
In this section we present a polynomial-time reduction for the almost-sure winning 
problem in partial-observation stochastic parity games to the sure winning problem 
in three-player parity games.

\smallskip\noindent{\em Reduction.} 
Let us denote by $[d]$ the set $\set{0,1,\ldots,d}$.
Given a partial-observation stochastic parity game graph 
$G=(\VA,\VB,\VR, A_1, \trans, E, \Obs,\obs)$ 
with a parity objective defined by priority function 
$p: S \to [d]$ 
%and a weight function $\wgt: S \to \Z$, where $\Z$ is the set of integers, 
we construct a three-player game graph
$\ov{G}=(\ovVA,\ovVB,\ovVC, A_1, \ov{\trans}, \ov{E}, \Obs,\ov{\obs})$
together with priority function $\ov{p}$.
The construction is specified as follows.
\begin{enumerate}
\item For every nonprobabilistic state $s \in \VA \cup \VB$, there is a 
corresponding state $\overline{s} \in \overline{S}$ 
such that 
\begin{itemize}
\item~$\ov{s}\in\ovVA$ if $s\in\VA$, else $\ov{s} \in \ovVB$;
\item~$\ov{p}(\ov{s})=p(s)$ and $\ov{\obs}(\ov{s})=\obs(s)$; 
\item~$\ov{\trans}(\ov{s},a)=\ov{t}$ where $t=\trans(s,a)$, 
for $s \in \VA$ and $a\in A_1$; and 
\item $(\ov{s}, \ov{t}) \in \ov{E}$ iff $(s, t) \in E$, for $s \in \VB$.
\end{itemize}

\item 
Every probabilistic state  $s \in \VR$ is replaced by the gadget
shown in Figure~\ref{fig:gadget-even} and Figure~\ref{fig:gadget-odd}.
In the figure, square-shaped states are player-2 states 
(in~$\ovVB$), and circle-shaped (or ellipsoid-shaped) states are player-3 states
(in~$\ovVC$).
Formally, from the state~$\ov{s}$ with priority $p(s)$ and observation 
$\obs(s)$ (i.e., $\ov{p}(\ov{s})=p(s)$ and  $\ov{\obs}(\ov{s})=\obs(s)$) 
the players play the following three-step game in~$\overline{G}$.
\begin{itemize}
\item First, in state $\overline{s}$ player~2 chooses a successor  
$(\wt{s}, 2k)$, for $2k \in \set{0,1,\ldots,p(s)+1}$.

\item For every state $(\wt{s},2k)$, we have $\ov{p}((\wt{s},2k))=p(s)$ and 
$\ov{\obs}((\wt{s},2k))=\obs(s)$.
For $k\geq 1$, in state $(\wt{s}, 2k)$ player~3 chooses between
two successors:  
state $(\wh{s}, 2k-1)$ with priority $2k-1$ and same observation as $s$, or
state $(\wh{s}, 2k)$ with priority $2k$ and same observation as $s$, 
(i.e., $\ov{p}((\wh{s}, 2k-1))=2k-1$, $\ov{p}((\wh{s}, 2k))=2k$, and
$\ov{\obs}((\wh{s}, 2k-1)) =\ov{\obs}((\wh{s}, 2k)) =\obs(s)$).
The state $(\wt{s},0)$ has only one successor $(\wh{s},0)$, with
$\ov{p}((\wh{s},0))=0$ and $\ov{\obs}((\wh{s}, 0)) =\obs(s)$.

\item 
Finally, in each state $(\wh{s}, k)$ the choice is between all
states $\overline{t}$ such that $(s, t) \in E$, and it belongs to
player~3 (i.e., in $\ovVC$) if $k$ is odd, and to player~2 (i.e., in 
$\ovVB$) if $k$ is even.
Note that every state in the gadget has the same observation as the original state.
\end{itemize}
\end{enumerate}
We denote by $\ov{G}=\tras(G)$ the three-player game, where player~1 has 
partial-observation, and both player~2 and player~3 have perfect-observation, 
obtained from a partial-observation stochastic game. Also observe that in 
$\ov{G}$ there are exactly four steps between two player~1 moves.

\smallskip\noindent{\em Observation sequence mapping.}
Note that since in our partial-observation games first player~1 plays,
then player~2, followed by probabilistic states, repeated ad infinitum, wlog, 
we can assume that for every observation $o\in\Obs$ we have either 
(i)~$\obs^{-1}(o) \subseteq \VA$; or
(ii)~$\obs^{-1}(o) \subseteq \VB$; or
(i)~$\obs^{-1}(o) \subseteq \VR$.
Thus we partition the observations as $\Obs_1$, $\Obs_2$, and $\Obs_P$.
Given an observation sequence $\kappa=o_0 o_1 o_2 \ldots o_n$ in $G$ corresponding
to a finite prefix of a play, we inductively define the sequence $\ov{\kappa}=\ov{h}(\kappa)$ 
in $\ov{G}$ as follows: 
(i)~$\ov{h}(o_0)=o_0$ if $o_0 \in \Obs_1 \cup \Obs_2$, else $o_0 o_0 o_0$; 
(ii)~$\ov{h}(o_0 o_1 \ldots o_n)= \ov{h}(o_0 o_1 \ldots o_{n-1}) o_n$  
if $o_n \in \Obs_1 \cup \Obs_2$, else $\ov{h}(o_0 o_1 \ldots o_{n-1}) o_n o_n o_n$.
Intuitively the mapping takes care of the two extra steps of the gadgets 
introduced for probabilistic states. 
The mapping is a bijection, and hence given an observation sequence $\ov{\kappa}$ of a 
play prefix in $\ov{G}$ we consider the inverse play prefix $\kappa=\ov{h}^{-1}(\ov{\kappa})$
such that $\ov{h}(\kappa)=\ov{\kappa}$.

\smallskip\noindent{\em Strategy mapping.} 
Given an observation-based strategy $\ov{\straa}$ in $\ov{G}$ we consider 
a strategy $\straa=\tras(\ov{\straa})$ as follows: 
for an observation sequence $\kappa$ corresponding to a play prefix in 
$G$ we have $\straa(\kappa)=\ov{\straa}(\ov{h}(\kappa))$.
The strategy $\straa$ is observation-based (since $\ov{\straa}$ is 
observation-based).
The inverse mapping $\tras^{-1}$ of strategies from $G$ to $\ov{G}$ 
is analogous. 
Note that for $\straa$ in $G$ we have $\tras(\tras^{-1}(\straa))=\straa$.
%%\medskip\noindent{\bf Notation for memoryless and finite-memory strategy.}
Let $\ov{\straa}$ be a finite-memory strategy  with memory $\mem$ for player~1 in 
the game $\ov{G}$. 
The strategy $\ov{\straa}$ can be considered as a memoryless 
strategy, denoted as $\ov{\straa}^*=\MemLess(\ov{\straa})$, in 
$\ov{G} \times \mem$ (the synchronous
product of $\ov{G}$ with $\mem$).
Given a strategy (pure memoryless) $\ov{\strab}$ for player~2 in the 
$2$-player game $\ov{G} \times \mem$, a strategy $\strab =\tras(\ov{\strab})$ 
in the partial-observation stochastic game $G\times \mem$ is defined as follows:
\[
\strab((s,m)) = (t,m'), \text{ if and only if } \ov{\strab}((\ov{s},m))=(\ov{t},m'); 
\text{ for all } s \in \VB.
\]  
%% as follows:
%for $\ov{s} \in \ov{S}$ and ${\tt m} \in \mem$ we have 
%$\ov{\straa}^*((\ov{s},{\tt m}))=(\ov{s}',{\tt m'})$ where 
%(a)~the memory-update function is 
%$\ov{\straa}_u(\ov{s},{\tt m})={\tt m'}$ and 
%(b)~the next-move function is $\ov{\straa}_m(\ov{s},{\tt m})=\ov{s}'$.
%A memoryless strategy $\straa^*$ in $G \times \mem$ 
%corresponding to the strategy $\ov{\straa}^*$ is defined in a 
%similar fashion as $\strab=\tras(\ov{\strab})$ is defined. 
%From the strategy $\straa^*$ we can then easily define a 
%finite-memory strategy $\straa$ in $G$ with memory $\mem$;
%we refer to this strategy as $\straa=\tras(\ov{\straa})$.

\smallskip\noindent{\em End component and the key property.}
Given an MDP, a set $U$ is an end component in the MDP if the 
sub-graph induced by $U$ is strongly connected, and for all 
probabilistic states in $U$ all out-going edges end up in $U$ 
(i.e., $U$ is closed for probabilistic states). 
The key property about MDPs that is used in our proofs 
is a result established by~\cite{CY95,luca-thesis} that given
an MDP, for all strategies, with probability~1 the set of 
states visited infinitely often is an end component. 
The key property allows us to analyze end components of 
MDPs and from properties of the end component conclude 
properties about all strategies.

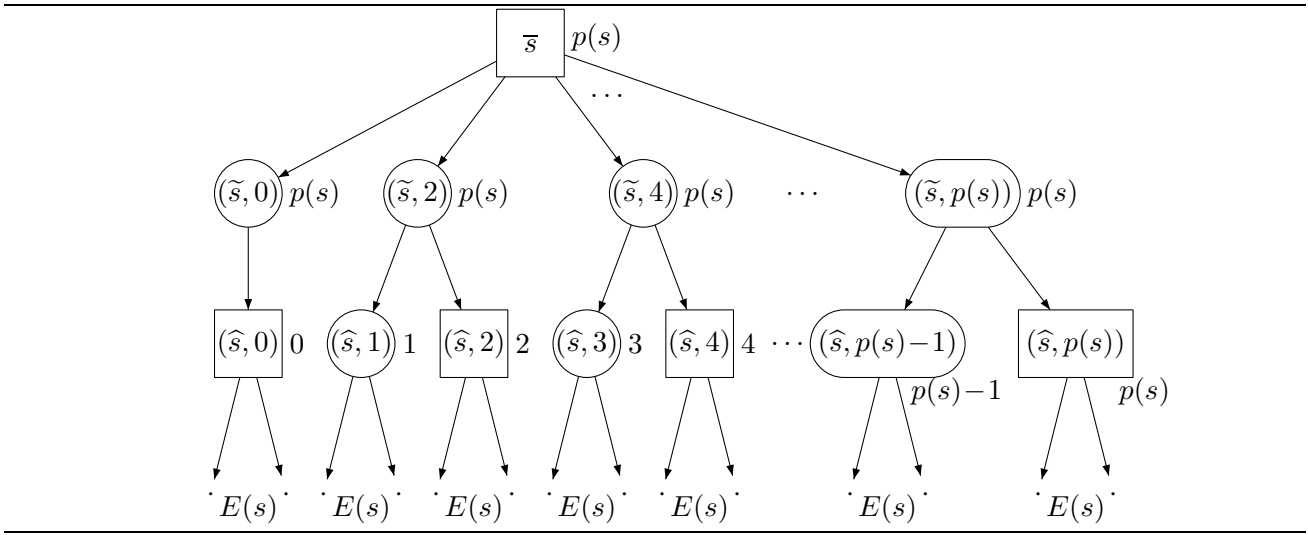
\begin{figure}[!ptb]
  \begin{center}
    \hrule
    \begin{picture}(122,70)(0,0)
%\put(0,0){\framebox(122,70){}}
\put(-3,0)
{

\gasset{Nw=9,Nh=9,Nmr=4.5,rdist=1, loopdiam=6}

%\node[Nmarks=i, Nmr=0](q0)(10,20){$q_0$}
%\node[Nmarks=n](q1)(30,30){$q_1$}
%\rpnode[Nmarks=n](r1)(50,30)(4,3.5){}

\node[Nmarks=n, Nmr=0](s)(47.5,65){$\ov{s}$}
\nodelabel[ExtNL=y, NLangle=5, NLdist=1](s){$p(s)$}

\node[Nframe=n](dots)(58,58){$\dots$}

\node[Nmarks=n](t0)(10,45){$(\wt{s},0)$}
\nodelabel[ExtNL=y, NLangle=0, NLdist=1](t0){$p(s)$}
\node[Nmarks=n](t2)(32.5,45){$(\wt{s},2)$}
\nodelabel[ExtNL=y, NLangle=0, NLdist=1](t2){$p(s)$}
\node[Nmarks=n](t4)(62.5,45){$(\wt{s},4)$}
\nodelabel[ExtNL=y, NLangle=0, NLdist=1](t4){$p(s)$}

\node[Nframe=n](dots)(84,45){$\dots$}

\node[Nmarks=n, Nadjust=w, Nmr=8](tp)(105,45){$(\wt{s},p(s))$}
\nodelabel[ExtNL=y, NLangle=0, NLdist=1](tp){$p(s)$}

\node[Nmarks=n, Nmr=0](u0)(10,25){$(\wh{s},0)$}
\nodelabel[ExtNL=y, NLangle=0, NLdist=1](u0){$0$}
\node[Nmarks=n](u1)(25,25){$(\wh{s},1)$}
\nodelabel[ExtNL=y, NLangle=0, NLdist=1](u1){$1$}
\node[Nmarks=n, Nmr=0](u2)(40,25){$(\wh{s},2)$}
\nodelabel[ExtNL=y, NLangle=0, NLdist=1](u2){$2$}
\node[Nmarks=n](u3)(55,25){$(\wh{s},3)$}
\nodelabel[ExtNL=y, NLangle=0, NLdist=1](u3){$3$}
\node[Nmarks=n, Nmr=0](u4)(70,25){$(\wh{s},4)$}
\nodelabel[ExtNL=y, NLangle=0, NLdist=1](u4){$4$}

\node[Nframe=n](dots)(82,25){$\dots$}

\node[Nmarks=n, Nadjust=w](up)(95,25){$(\wh{s},p(s)\!-\!1)$}
\nodelabel[ExtNL=y, NLangle=325, NLdist=0](up){$p(s)\!-\!1$}

\node[Nmarks=n, Nmr=0, Nadjust=w](uq)(120,25){$(\wh{s},p(s))$}
\nodelabel[ExtNL=y, NLangle=325, NLdist=0](uq){$p(s)$}

\node[Nframe=n, Nw=4, Nh=4, Nmarks=n](dummy)(5,5){$\cdot$}
\drawedge[ELpos=50, ELside=r, curvedepth=0](u0,dummy){}

\node[Nframe=n, Nw=5, Nh=5, Nmarks=n](dummy)(15,5){$\cdot$}
\drawedge[ELpos=50, ELside=r, curvedepth=0](u0,dummy){}

\node[Nmarks=n, Nframe=n](label)(10,3){$E(s)$}

\node[Nframe=n, Nw=4, Nh=4, Nmarks=n](dummy)(20,5){$\cdot$}
\drawedge[ELpos=50, ELside=r, curvedepth=0](u1,dummy){}

\node[Nframe=n, Nw=5, Nh=5, Nmarks=n](dummy)(30,5){$\cdot$}
\drawedge[ELpos=50, ELside=r, curvedepth=0](u1,dummy){}

\node[Nmarks=n, Nframe=n](label)(25,3){$E(s)$}

\node[Nframe=n, Nw=4, Nh=4, Nmarks=n](dummy)(35,5){$\cdot$}
\drawedge[ELpos=50, ELside=r, curvedepth=0](u2,dummy){}

\node[Nframe=n, Nw=5, Nh=5, Nmarks=n](dummy)(45,5){$\cdot$}
\drawedge[ELpos=50, ELside=r, curvedepth=0](u2,dummy){}

\node[Nmarks=n, Nframe=n](label)(40,3){$E(s)$}

\node[Nframe=n, Nw=4, Nh=4, Nmarks=n](dummy)(50,5){$\cdot$}
\drawedge[ELpos=50, ELside=r, curvedepth=0](u3,dummy){}

\node[Nframe=n, Nw=5, Nh=5, Nmarks=n](dummy)(60,5){$\cdot$}
\drawedge[ELpos=50, ELside=r, curvedepth=0](u3,dummy){}

\node[Nmarks=n, Nframe=n](label)(55,3){$E(s)$}

\node[Nframe=n, Nw=4, Nh=4, Nmarks=n](dummy)(65,5){$\cdot$}
\drawedge[ELpos=50, ELside=r, curvedepth=0](u4,dummy){}

\node[Nframe=n, Nw=5, Nh=5, Nmarks=n](dummy)(75,5){$\cdot$}
\drawedge[ELpos=50, ELside=r, curvedepth=0](u4,dummy){}

\node[Nmarks=n, Nframe=n](label)(70,3){$E(s)$}

\node[Nframe=n, Nw=4, Nh=4, Nmarks=n](dummy)(90,5){$\cdot$}
\drawedge[ELpos=50, ELside=r, curvedepth=0](up,dummy){}

\node[Nframe=n, Nw=5, Nh=5, Nmarks=n](dummy)(100,5){$\cdot$}
\drawedge[ELpos=50, ELside=r, curvedepth=0](up,dummy){}

\node[Nmarks=n, Nframe=n](label)(95,3){$E(s)$}

\node[Nframe=n, Nw=4, Nh=4, Nmarks=n](dummy)(115,5){$\cdot$}
\drawedge[ELpos=50, ELside=r, curvedepth=0](uq,dummy){}

\node[Nframe=n, Nw=5, Nh=5, Nmarks=n](dummy)(125,5){$\cdot$}
\drawedge[ELpos=50, ELside=r, curvedepth=0](uq,dummy){}

\node[Nmarks=n, Nframe=n](label)(120,3){$E(s)$}

\drawedge[ELpos=50, ELside=r, curvedepth=0](s,t0){}
\drawedge[ELpos=50, ELside=l, curvedepth=0](s,t2){}

\drawedge[ELpos=50, ELside=r, curvedepth=0](s,t4){}
\drawedge[ELpos=50, ELside=l, curvedepth=0](s,tp){}

\drawedge[ELpos=50, ELside=r, curvedepth=0](t0,u0){}

\drawedge[ELpos=50, ELside=r, curvedepth=0](t2,u1){}
\drawedge[ELpos=50, ELside=r, curvedepth=0](t2,u2){}

\drawedge[ELpos=50, ELside=r, curvedepth=0](t4,u3){}
\drawedge[ELpos=50, ELside=r, curvedepth=0](t4,u4){}

\drawedge[ELpos=50, ELside=r, curvedepth=0](tp,up){}
\drawedge[ELpos=50, ELside=r, curvedepth=0](tp,uq){}
}
%\drawedge[dash={1}0](n3bis,nkbis){$0,1$}

\end{picture}
    \hrule
    \caption{Reduction gadget when $p(s)$ is even. \label{fig:gadget-even}}
  \end{center}
\end{figure}

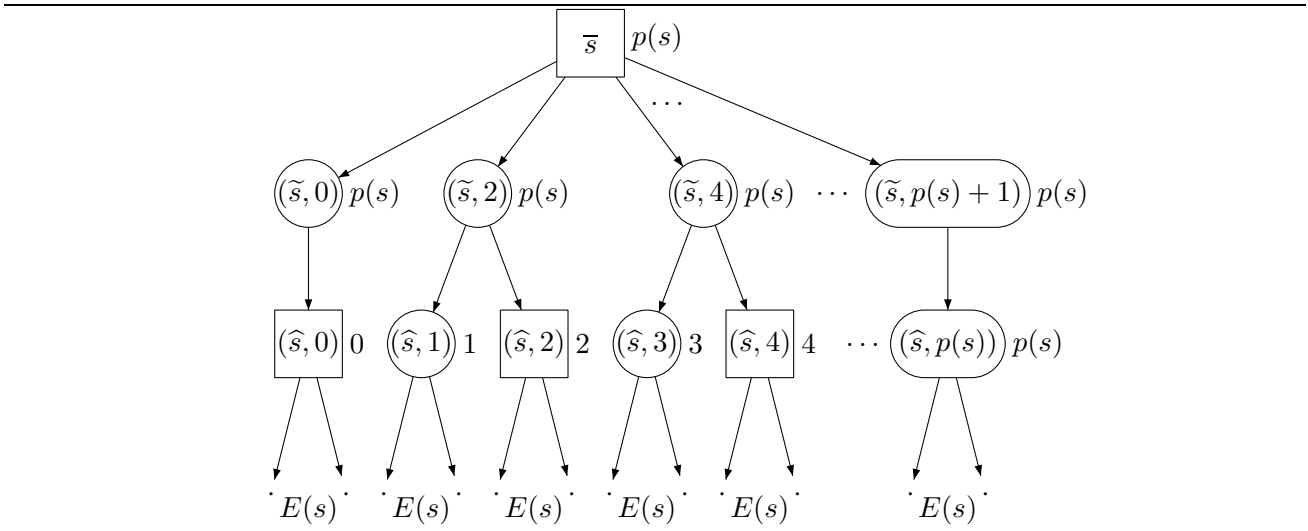
\begin{figure}[!ptb]
  \begin{center}
    \hrule
    \begin{picture}(112,70)(0,0)
%\put(0,0){\framebox(112,70){}}

\gasset{Nw=9,Nh=9,Nmr=4.5,rdist=1, loopdiam=6}

%\node[Nmarks=i, Nmr=0](q0)(10,20){$q_0$}
%\node[Nmarks=n](q1)(30,30){$q_1$}
%\rpnode[Nmarks=n](r1)(50,30)(4,3.5){}

\node[Nmarks=n, Nmr=0](s)(47.5,65){$\ov{s}$}
\nodelabel[ExtNL=y, NLangle=5, NLdist=1](s){$p(s)$}

\node[Nframe=n](dots)(58,57){$\dots$}

\node[Nmarks=n](t0)(10,45){$(\wt{s},0)$}
\nodelabel[ExtNL=y, NLangle=0, NLdist=1](t0){$p(s)$}
\node[Nmarks=n](t2)(32.5,45){$(\wt{s},2)$}
\nodelabel[ExtNL=y, NLangle=0, NLdist=1](t2){$p(s)$}
\node[Nmarks=n](t4)(62.5,45){$(\wt{s},4)$}
\nodelabel[ExtNL=y, NLangle=0, NLdist=1](t4){$p(s)$}

\node[Nframe=n](dots)(80,45){$\dots$}

\node[Nmarks=n, Nadjust=w, Nmr=8](tp)(95,45){$(\wt{s},p(s)+1)$}
\nodelabel[ExtNL=y, NLangle=0, NLdist=1](tp){$p(s)$}

\node[Nmarks=n, Nmr=0](u0)(10,25){$(\wh{s},0)$}
\nodelabel[ExtNL=y, NLangle=0, NLdist=1](u0){$0$}
\node[Nmarks=n](u1)(25,25){$(\wh{s},1)$}
\nodelabel[ExtNL=y, NLangle=0, NLdist=1](u1){$1$}
\node[Nmarks=n, Nmr=0](u2)(40,25){$(\wh{s},2)$}
\nodelabel[ExtNL=y, NLangle=0, NLdist=1](u2){$2$}
\node[Nmarks=n](u3)(55,25){$(\wh{s},3)$}
\nodelabel[ExtNL=y, NLangle=0, NLdist=1](u3){$3$}
\node[Nmarks=n, Nmr=0](u4)(70,25){$(\wh{s},4)$}
\nodelabel[ExtNL=y, NLangle=0, NLdist=1](u4){$4$}

\node[Nframe=n](dots)(84,25){$\dots$}

\node[Nmarks=n, Nadjust=w](up)(95,25){$(\wh{s},p(s))$}
\nodelabel[ExtNL=y, NLangle=0, NLdist=1](up){$p(s)$}

\node[Nframe=n, Nw=4, Nh=4, Nmarks=n](dummy)(5,5){$\cdot$}
\drawedge[ELpos=50, ELside=r, curvedepth=0](u0,dummy){}

\node[Nframe=n, Nw=5, Nh=5, Nmarks=n](dummy)(15,5){$\cdot$}
\drawedge[ELpos=50, ELside=r, curvedepth=0](u0,dummy){}

\node[Nmarks=n, Nframe=n](label)(10,3){$E(s)$}

\node[Nframe=n, Nw=4, Nh=4, Nmarks=n](dummy)(20,5){$\cdot$}
\drawedge[ELpos=50, ELside=r, curvedepth=0](u1,dummy){}

\node[Nframe=n, Nw=5, Nh=5, Nmarks=n](dummy)(30,5){$\cdot$}
\drawedge[ELpos=50, ELside=r, curvedepth=0](u1,dummy){}

\node[Nmarks=n, Nframe=n](label)(25,3){$E(s)$}

\node[Nframe=n, Nw=4, Nh=4, Nmarks=n](dummy)(35,5){$\cdot$}
\drawedge[ELpos=50, ELside=r, curvedepth=0](u2,dummy){}

\node[Nframe=n, Nw=5, Nh=5, Nmarks=n](dummy)(45,5){$\cdot$}
\drawedge[ELpos=50, ELside=r, curvedepth=0](u2,dummy){}

\node[Nmarks=n, Nframe=n](label)(40,3){$E(s)$}

\node[Nframe=n, Nw=4, Nh=4, Nmarks=n](dummy)(50,5){$\cdot$}
\drawedge[ELpos=50, ELside=r, curvedepth=0](u3,dummy){}

\node[Nframe=n, Nw=5, Nh=5, Nmarks=n](dummy)(60,5){$\cdot$}
\drawedge[ELpos=50, ELside=r, curvedepth=0](u3,dummy){}

\node[Nmarks=n, Nframe=n](label)(55,3){$E(s)$}

\node[Nframe=n, Nw=4, Nh=4, Nmarks=n](dummy)(65,5){$\cdot$}
\drawedge[ELpos=50, ELside=r, curvedepth=0](u4,dummy){}

\node[Nframe=n, Nw=5, Nh=5, Nmarks=n](dummy)(75,5){$\cdot$}
\drawedge[ELpos=50, ELside=r, curvedepth=0](u4,dummy){}

\node[Nmarks=n, Nframe=n](label)(70,3){$E(s)$}

\node[Nframe=n, Nw=4, Nh=4, Nmarks=n](dummy)(90,5){$\cdot$}
\drawedge[ELpos=50, ELside=r, curvedepth=0](up,dummy){}

\node[Nframe=n, Nw=5, Nh=5, Nmarks=n](dummy)(100,5){$\cdot$}
\drawedge[ELpos=50, ELside=r, curvedepth=0](up,dummy){}

\node[Nmarks=n, Nframe=n](label)(95,3){$E(s)$}

\drawedge[ELpos=50, ELside=r, curvedepth=0](s,t0){}
\drawedge[ELpos=50, ELside=l, curvedepth=0](s,t2){}

\drawedge[ELpos=50, ELside=r, curvedepth=0](s,t4){}
\drawedge[ELpos=50, ELside=l, curvedepth=0](s,tp){}

\drawedge[ELpos=50, ELside=r, curvedepth=0](t0,u0){}

\drawedge[ELpos=50, ELside=r, curvedepth=0](t2,u1){}
\drawedge[ELpos=50, ELside=r, curvedepth=0](t2,u2){}

\drawedge[ELpos=50, ELside=r, curvedepth=0](t4,u3){}
\drawedge[ELpos=50, ELside=r, curvedepth=0](t4,u4){}

\drawedge[ELpos=50, ELside=r, curvedepth=0](tp,up){}

%\drawedge[dash={1}0](n3bis,nkbis){$0,1$}

\end{picture}
    \hrule
    \caption{Reduction gadget when $p(s)$ is odd. \label{fig:gadget-odd}}
  \end{center}
\end{figure}

\smallskip\noindent{\em The key lemma.} We are now ready to present our main
lemma that establishes the correctness of the reduction. 
Since the proof of the lemma is long we split the proof into two parts.

\begin{lemma}\label{lemm:par-reduction1}
Given a partial-observation stochastic parity game $G$ with parity objective $\Parity(p)$,
let $\ov{G}=\tras(G)$ be the three-player game with the modified parity objective $\Parity(\ov{p})$
obtained by our reduction. 
Consider a finite-memory strategy $\ov{\straa}$ with memory $\mem$ for player~1
in $\ov{G}$.
Let us denote by $\ov{G}_{\ov{\straa}}$ the perfect-observation two-player game 
played over $\ov{G} \times \mem$ by player~2 and player~3 after fixing the strategy 
$\ov{\straa}$ for player~1.
Let 
\[
\ov{U}_1^{\ov{\straa}}=\set{(\ov{s},m) \in \ov{S} \times \mem \mid \text{player~3 has a sure winning 
strategy for the objective $\Parity(\ov{p})$ from $(\ov{s},m)$ in } \ov{G}_{\ov{\straa}}
};
\]
and let $\ov{U}_2^{\ov{\straa}} = (\ov{S}\times \mem) \setminus \ov{U}_1^{\ov{\straa}}$
be the set of sure winning states for player~2 in $\ov{G}_{\ov{\straa}}$.
Consider the strategy $\straa=\tras(\ov{\straa})$, and the sets
$U_1^{\straa}=\set{(s,m) \in S\times \mem \mid (\ov{s},m) \in \ov{U}_1^{\ov{\straa}} }$;
and $U_2^\straa= (S\times \mem) \setminus U_1^\straa$.
The following assertions hold.

\begin{enumerate}
\item For all $(s,m)\in U_1^\straa$, for all strategies $\strab$ of player~2 we 
have  $\Prb_{(s,m)}^{\straa,\strab}(\Parity(p))=1$.
\item For all $(s,m) \in U_2^\straa$, there exists a strategy $\strab$ of player~2
such that $\Prb_{(s,m)}^{\straa,\strab}(\Parity(p))<1$. 
\end{enumerate}
\end{lemma}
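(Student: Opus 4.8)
## Proof Proposal

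The plan is to prove both assertions by exploiting the key property of MDPs: once the finite-memory strategy $\straa = \tras(\ov\straa)$ is fixed in $G \times \mem$, we obtain a perfect-observation MDP for player~2, and for any player-2 strategy $\strab$, with probability~1 the set of states visited infinitely often forms an end component. The whole argument then reduces to understanding which end components of $G \times \mem$ can arise, and relating them to the winning regions of the gadget game $\ov G_{\ov\straa}$. The central observation I would establish first is a \emph{correspondence between end components} $C$ in $G \times \mem$ (under $\straa$) and the behavior available to player~3 in the corresponding sub-game of $\ov G_{\ov\straa}$: an end component $C$ induces a sub-game on the $\ov s$-states and their gadgets, and within each gadget for a probabilistic state $s \in C$, player~2's choice of the successor $(\wt s, 2k)$ together with player~3's binary choice between priority $2k-1$ and $2k$ exactly captures the interplay between "which priority do we charge" and "which $E(s)$-successor stays in $C$." The gadget is designed so that player~3 can realize priority $\min\{p(s), \text{anything even} \le p(s)+1\}$ whenever it is to player~3's advantage, while player~2 can force an odd priority $2k-1$ only by also committing to a $(\wh s, 2k-1)$-exit that belongs to player~3 — so player~3 controls the exit whenever an odd priority is charged.

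For assertion~(1), I would take $(s,m) \in U_1^\straa$, so $(\ov s, m) \in \ov U_1^{\ov\straa}$, and let $\ov\strac$ be player~3's sure-winning strategy from $(\ov s,m)$ in $\ov G_{\ov\straa}$. Fix an arbitrary player-2 strategy $\strab$ in $G \times \mem$. I would argue that $U_1^\straa$ is closed under $\straa$ and $\strab$ (using that $\ov U_1^{\ov\straa}$ is a player-3 trap in the non-gadget part, and that from a probabilistic state $s$ whose gadget is winning for player~3, \emph{every} edge $(s,t) \in E$ leads to a winning state — otherwise player~2 could route to a losing $\ov t$ via the $(\wh s, 0)$ or an even $(\wh s, 2k)$ state, contradicting that $(\ov s, m) \in \ov U_1^{\ov\straa}$). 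Hence the play stays in $U_1^\straa$ almost surely, and the limit end component $C \subseteq U_1^\straa$. Now I would show $C$, viewed in $\ov G_{\ov\straa}$, admits a player-3 response consistent with $\ov\strac$ that visits the same non-probabilistic states infinitely often and, in each gadget of a probabilistic state recurring in $C$, charges the priority $\ov\strac$ would charge; since $\ov\strac$ is sure-winning, the minimum priority charged infinitely often along that play is even, and by the gadget construction this minimum equals $\min\{p(s) : s \in C\}$ (the gadget can lower an even $p(s)$ only to a smaller even value and can only raise to odd $2k-1 > p(s)$, which player~3 avoids). Therefore $\min\{p(s) : s \in C\}$ is even, so $\Parity(p)$ holds on every play with limit in $C$, giving probability~1.

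For assertion~(2), I would take $(s,m) \in U_2^\straa$ and let $\ov\strab$ be player~2's memoryless sure-winning strategy from $(\ov s, m)$ in $\ov G_{\ov\straa}$ (memoryless since it is a perfect-information parity game). Set $\strab = \tras(\ov\strab)$ on player-2 states of $G \times \mem$, and additionally have $\strab$ resolve the probabilistic-gadget choices — i.e., from a recurrent probabilistic state $s$, $\strab$ mimics the $(\wt s, 2k)$ and even-$(\wh s, 2k)$-exit choices of $\ov\strab$. The point is that $\ov\strab$ winning means: against \emph{every} player-3 strategy the play's infinitely-often-minimum priority is odd; in particular, for the specific player-3 behavior induced by the random walk restricted to its limit end component $C$, the odd priority $\ov\strab$ forces is $\le$ every priority ever charged, which forces $\min\{p(s):s\in C\}$ to be odd (player~3's binary gadget choices are the only freedom, and $\ov\strab$ wins regardless of them). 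Then I would invoke the key MDP property in the reverse direction: there is a positive-probability event that the walk reaches and gets absorbed into such a "bad" end component $C$ reachable under $\strab$ from $(s,m)$, and on that event $\Parity(p)$ fails; since this event has positive probability, $\Prb_{(s,m)}^{\straa,\strab}(\Parity(p)) < 1$.

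The main obstacle, and where I would spend the most care, is the precise \emph{gadget-to-end-component} correspondence in the middle of each argument: translating a player-3 strategy in the turn-based game $\ov G_{\ov\straa}$ (which sees the whole history) into consistent choices along the random trajectory of the MDP $G \times \mem$, and conversely showing that the priorities charged in the gadget of a recurrent probabilistic state $s$ can be made to equal exactly $p(s)$ infinitely often (so that $\min$ over the gadget priorities matches $\min$ over $\{p(s) : s \in C\}$). This requires checking that player~3's $(\wt s, 2k)$-to-$(\wh s, 2k)$ "even" choice is always available and does not let player~2 escape the end component, and that the only way an odd priority can be the infinitely-often minimum is if player~3 is forced into it — which is exactly the statement that player~3 loses the gadget sub-game, i.e., $(\ov s, m) \in \ov U_2^{\ov\straa}$. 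Making this bookkeeping rigorous, including the case analysis on $p(s)$ even versus odd (Figures~\ref{fig:gadget-even} and~\ref{fig:gadget-odd}), is the crux; everything else is standard MDP end-component reasoning via the cited results of~\cite{CY95,luca-thesis}.
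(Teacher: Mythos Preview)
Your overall plan---reduce to end-component analysis via the key MDP property---matches the paper, but there is a genuine gap in Part~2, and your Part~1 argument hides the real work behind a step that does not go through as stated.

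For Part~2, you assert that under $\strab = \tras(\ov\strab)$ there is positive probability of absorption in a bad end component inside $U_2^\straa$, but nothing you wrote rules out that from $(s,m)\in U_2^\straa$ the Markov chain $G_{\straa,\strab}$ reaches $U_1^\straa$ almost surely (where all recurrent classes are good by Part~1). The paper spends a separate argument on exactly this: assuming a nonempty $X\subseteq U_2^\straa$ reaches $U_1^\straa$ almost surely, it builds a player-3 strategy that, whenever $\ov\strab$ picks $(\wt s,2\ell)$ with $\ell\geq 1$, responds with the odd $(\wh s,2\ell-1)$ state (which player~3 controls) and then shortens distance to $\ov U_1^{\ov\straa}$; the only way $\ov\strab$ can block this is to pick $(\wt s,0)$ infinitely often, visiting priority~$0$---either way contradicting that $\ov\strab$ wins from $\ov U_2^{\ov\straa}$. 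Without this step your conclusion does not follow. (Also, ``have $\strab$ resolve the probabilistic-gadget choices'' is a category error: in $G$ the probabilistic states are random and $\strab$ acts only on $S_2$-states.) For Part~1, your direct approach---fix the winning $\ov\strac$ and read off that gadget priorities along ``that play'' equal $\min\{p(s):s\in C\}$---does not work as written, because you never specify which player-2 strategy in $\ov G_{\ov\straa}$ produces ``that play,'' and the gadget priorities depend on player~2's choice of $(\wt s,2k)$. The paper instead argues by contradiction: given a hypothetical end component $C\subseteq U_1^\straa$ with odd minimum priority $2r-1$, it \emph{constructs} a player-2 strategy $\ov\strab$ (against an arbitrary memoryless player-3 strategy) that either exploits any player-3 move into some $(\wh s,2\ell-1)$ with $\ell<r$, or else selects $(\wt s,2r)$ and from the player-2 state $(\wh s,2r)$ shortens BFS-distance within $C$ to a fixed priority-$(2r-1)$ state; every resulting cycle has odd minimum priority, contradicting $\ov U_1^{\ov\straa}$. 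This explicit distance-shortening construction is the mechanism missing from your sketch.
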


We first present the proof for part~1 and then for part~2.

\begin{proof}[(of Lemma~\ref{lemm:par-reduction1}: part~1).]
Consider a finite-memory strategy $\ov{\straa}$ for player~1 with memory 
$\mem$ in the game $\ov{G}$.
Once the strategy $\ov{\straa}$ is fixed we obtain the two-player finite-state
perfect-observation game $\ov{G}_{\ov{\straa}}$ (between player~3 and the adversary player~2).
Recall the sure winning sets
\[
\ov{U}_1^{\ov{\straa}}=\set{(\ov{s},m) \in \ov{S} \times \mem \mid \text{player~3 has a sure winning 
strategy for the objective $\Parity(\ov{p})$ from $(\ov{s},m)$ in } \ov{G}_{\ov{\straa}}
}
\]
for player~3, and 
$\ov{U}_2^{\ov{\straa}} = (\ov{S}\times \mem) \setminus \ov{U}_1^{\ov{\straa}}$
for player~2, respectively, in $\ov{G}_{\ov{\straa}}$.
Let $\straa=\tras(\ov{\straa})$ be the corresponding strategy in $G$.
We denote by $\ov{\straa}^*=\MemLess(\ov{\straa})$ and $\straa^*$ the 
corresponding memoryless strategies of $\ov{\straa}$ in 
$\ov{G} \times \mem$ and $\straa$ in $G \times \mem$,
respectively.
We show that all states in $U_1^{\straa}$ are almost-sure winning, i.e., 
given $\straa$, for all $(s,m) \in U_1^{\straa}$, for all strategies $\strab$ for 
player~2 in $G$ we have $\Prb_{(s,m)}^{\straa,\strab}(\Parity(p))=1$
(recall $U_1^{\straa}=\set{(s,m) \in S\times \mem \mid (\ov{s},m) \in \ov{U}_1^{\ov{\straa}} }$). 
We also consider explicitly the MDP %%$(G \obciach U_1^{\straa})_{\straa}$ as
$(G \times \mem \obciach U_1^{\straa})_{\straa^*}$ to analyze strategies 
of player~2 on the synchronous product, i.e., we consider the player-2 MDP 
obtained after fixing the memoryless strategy $\straa^*$ in $G\times \mem$,
and then restrict the MDP to the set $U_1^\straa$.

\smallskip\noindent{\em Two key components.}
The proof has two key components.
First, we argue that all end components in the MDP restricted to $U_1^\straa$ 
are winning for player~1 (have min priority even).
Second we argue that given the starting state $(s,m)$ is in $U_1^\straa$, 
almost-surely the set of states visited infinitely often is 
an end component in $U_1^\straa$ 
against all strategies of player~2.
These two key components establish the desired result.

\medskip\noindent{\em Winning end components.} %%in $(G \obciach U_1^{\straa})_\straa$.}
Our first goal is to show that every end component $C$ in the player-2 MDP 
$(G\times \mem \obciach U_1^{\straa})_{\straa^*}$ 
is winning for player~1 for the parity objective, i.e.,  
the minimum priority of $C$ is even.
We argue that if there is an end component $C$ in $(G \times \mem \obciach U_1^{\straa})_{\straa^*}$ 
that is winning for player~2 for the parity objective (i.e., minimum priority of $C$ is odd),
then against any memoryless player-3 strategy $\ov{\strac}$ in $\ov{G}_{\ov{\straa}}$, 
player~2 can construct a cycle in the game 
$(\overline{G}\times \mem \obciach \overline{U}_1^{\ov{\straa}})_{\ov{\straa}^*}$ 
that is winning for player~2 (i.e., minimum priority of the cycle is odd) 
(note that given the strategy $\ov{\straa}$ is fixed, we have finite-state
perfect-observation parity games, and hence in the enlarged game we can restrict ourselves
to memoryless strategies for player~3).
This gives a contradiction because player~3 has a sure winning strategy %%for player~1 
from the set $\overline{U}_1^{\ov{\straa}}$ in the 2-player parity game $\overline{G}_{\ov{\straa}}$.
Towards contradiction, let $C$ be an end component in 
$(G\times \mem \obciach U_1^\straa)_{\straa^*}$ that is winning for 
player~2, and let its minimum odd priority be $2r-1$, for some $r \in \Nats$.
Then there is a memoryless strategy $\strab'$ for player~2 in the MDP 
$(G\times \mem \obciach U_1^\straa)_{\straa^*}$ such that $C$ is a bottom 
scc (or a terminal scc) in the Markov chain graph of $(G\times \mem \obciach U_1^\straa)_{\straa^*,\strab'}$. 
Let $\ov{\strac}$ be a memoryless for player~3 in $(\ov{G} \times \mem \obciach \ov{U}_1^{\ov{\straa}})_{\ov{\straa}^*}$. 
Given $\ov{\strac}$ for player~3 and strategy $\strab'$ for player~2 in 
$G \times \mem$, we construct a strategy $\ov{\strab}$ for player~2 in the 
game $(\ov{G} \times \mem \obciach \ov{U}_1^{\ov{\straa}})_{\ov{\straa}^*}$ 
as follows.
For a player-2 state in $C$, the strategy $\ov{\strab}$ follows the strategy 
$\strab'$,
i.e., for a state $(s,{ m}) \in C$ with $s \in \VB$ we have 
$\ov{\strab}((\ov{s},{ m})) =(\ov{t},{ m'})$ where 
$(t,{ m'})=\strab'((s,{ m}))$.
For a probabilistic state in $C$ we define the strategy as follows (i.e., we now 
consider a state $(s,{ m}) \in C$ with $s \in \VR$):
\begin{itemize}
\item if for some successor state $((\wt{s},2\ell),{ m'})$ of $(\ov{s},{ m})$,
the player-3 strategy $\ov{\strac}$ chooses a successor $((\wh{s},2\ell-1),{ m''}) \in C$ 
at the state  $((\wt{s},2\ell),{ m'})$, for $\ell < r$, %%in $(\ov{G} \times \mem \obciach \ov{U}_1 \times \mem)_{\ov{\straa}^*}$, 
then the strategy $\ov{\strab}$ chooses at state $(\ov{s},{ m})$ the 
successor $((\wt{s}, 2\ell),{ m'})$; and 

\item otherwise the strategy $\ov{\strab}$ chooses at state 
$(\ov{s},{ m})$ the successor $((\wt{s}, 2r),{ m'})$, and
at $((\wh{s}, 2r),{ m''})$ it chooses a successor shortening the
distance (i.e., chooses a successor with smaller breadth-first-search
distance) to a fixed state $(\ov{s}^*,m^*)$ of priority $2r-1$ of $C$ 
(such a state $(s^*,m^*)$ exists in $C$ since $C$ is strongly connected and has 
minimum priority $2r-1$); 
and for the fixed state of priority $2r-1$ the strategy chooses a successor 
$(\ov{s},{m}')$ such that $(s,{m'})\in C$.
\end{itemize}
Consider an arbitrary cycle in the subgraph
$(\overline{G} \times \mem \obciach \ov{C})_{\ov{\straa} , \ov{\strab},\ov{\strac}}$
where $\overline{C}$ is the set of states in the gadgets of states in $C$.   
There are two cases.
\begin{itemize}
\item    
If there is at least one state $((\wh{s}, 2\ell-1),{ m})$,
with $\ell \leq r$ on the cycle, then the minimum priority on the 
cycle is odd, as even priorities smaller than $2r$ are not visited
by the construction as $C$ does not contain states of even 
priorities smaller than $2r$.
\item    
Otherwise, in all states choices shortening the distance to the
state with priority $2r-1$ are taken and hence the cycle must contain a 
priority $2r-1$ state and all other priorities on the cycle are 
$\geq 2r-1$, so $2r-1$ is the minimum priority on the cycle. 
\end{itemize}
Hence a winning end component for player~2 in the MDP %%$(G \obciach U_1^\straa)_{\straa}$ 
contradicts that player~3 has a sure winning strategy in $\ov{G}_{\ov{\straa}}$ 
from $\ov{U}_1^{\ov{\straa}}$.
Thus it follows that all end components are winning for player~1 in 
$(G \times \mem \obciach U_1^{\straa})_{\straa^*}$.

\medskip\noindent{\em Almost-sure reachability to winning end-components.}
Finally, we consider the probability of staying in $U_1^\straa$.
For every probabilistic state $(s,m) \in (\VR \times \mem)\cap U_1^\straa$, 
all of its successors must be in $U_1^\straa$.
Otherwise,  player~2 in the state $(\ov{s},m)$ of the game
$\ov{G}_{\ov{\straa}}$ can choose the successor $(\wt{s}, 0)$ and
then a successor to its winning set $\overline{U}_2^{\ov{\straa}}$. 
This again contradicts the assumption that $(\ov{s},m)$ belong to 
the sure winning states $\ov{U}_1^{\ov{\straa}}$ for player~3 in $\ov{G}_{\ov{\straa}}$.
Similarly, for every state $(s,m)\in (S_2\times \mem) \cap U_1^\straa$ 
we must have all its successors are in $U_1^\straa$. 
For all states $(s,m) \in (S_1 \times \mem) \cap U_1^\straa$, the 
strategy $\straa$ chooses a successor in $U_1^\straa$.
Hence for all strategies $\strab$ of player~2, for all states 
$(s,m) \in U_1^\straa$, the objective $\Safe(U_1^\straa)$ 
(which requires that only states in $U_1^\straa$ are visited)
is ensured almost-surely (in fact surely), and hence with 
probability~1 the set of states visited infinitely often is an end component 
in $U_1^\straa$ (by key property of MDPs).
Since every end component in $(G \times \mem \obciach U_1^\straa)_{{\straa^*}}$ 
has even minimum priority, 
it follows that the strategy $\straa$ is an almost-sure winning strategy 
for the parity objective $\Parity(p)$ for player~1 from all states 
$(s,m) \in U_1^\straa$.
This concludes the proof for first part of the lemma.
\hfill\qed
\end{proof}

We now present the proof for the second part.

\begin{proof}[(of Lemma~\ref{lemm:par-reduction1}:part~2).]
Consider a memoryless sure winning strategy $\ov{\strab}$ for player~2 
in $\ov{G}_{\ov{\straa}}$ from the set $\ov{U}_2^{\ov{\straa}}$.
Let us consider the strategies $\straa=\tras(\ov{\straa})$ and 
$\strab=\tras(\ov{\strab})$, and consider the Markov chain 
$G_{\straa,\strab}$.
Our proof shows the following two properties to establish the 
claim: (1)~in the Markov chain $G_{\straa,\strab}$ all bottom sccs (the 
recurrent classes) in $U_2^\straa$ have odd minimum priority;
and (2)~from all states in $U_2^\straa$ some recurrent class in $U_2^\straa$
is reached with positive probability.  
This establishes the desired result of the lemma.

\medskip\noindent{\em No winning bottom scc for player~1 in $U_2^\straa$.}
Assume towards contradiction that there is a bottom scc $C$ contained in 
$U_2^\straa$ in the Markov chain $G_{\straa,\strab}$ 
such that the minimum priority in $C$ is even. 
From $C$ we construct a winning cycle (minimum priority is even) in 
$\ov{U}_2^{\ov{\straa}}$ for player~3 in the game $\ov{G}_{\ov{\straa}}$ 
given the strategy $\ov{\strab}$.
This contradicts that $\ov{\strab}$ is a sure winning strategy for player~2
from $\ov{U}_2^{\ov{\straa}}$ in $\ov{G}_{\ov{\straa}}$. 
Let the minimum priority of $C$ be $2r$ for some $r \in \Nats$.
The idea is similar to the construction of part~1.  %%of Lemma~\ref{lemm:enpar-reduction1}, and we only present the important points below.
Given $C$,  and  the strategies $\ov{\straa}$ and $\ov{\strab}$, 
we construct a strategy $\ov{\strac}$ for player~3 in $\ov{G}$ as follows: 
%for player~1 states in $C$, the strategy $\ov{\straa}$ follows the strategy 
%$\straa$.
For a probabilistic state $(s,m)$ in $C$:
\begin{itemize}
\item if $\ov{\strab}$ chooses
a state $((\wt{s},2\ell-2),m')$, with $\ell\leq r$, then $\ov{\strac}$ 
chooses the successor $((\wh{s},2\ell-2),m')$;
\item otherwise $\ell>r$ (i.e., $\ov{\strab}$ chooses a state 
$((\wt{s},2\ell-2),m')$ for $\ell > r$), then 
$\ov{\strac}$ chooses the state $((\wh{s},2\ell-1),m')$, and 
then a successor to shorten the distance to a fixed state with priority $2r$ 
(such a state exists in $C$); 
and for the fixed state of priority $2r$, the strategy $\ov{\strac}$ chooses 
a successor in $C$.
\end{itemize}
Similar to the proof of part~1, we argue that we obtain a cycle with minimum even priority in the graph
$(\ov{G} \times \mem \obciach \ov{U}_2^{\ov{\straa}})_{\ov{\straa},\ov{\strab},\ov{\strac}}$.
Consider an arbitrary cycle in the subgraph
$(\overline{G} \times \mem \obciach \ov{C})_{\ov{\straa} , \ov{\strab},\ov{\strac}}$
where $\overline{C}$ is the set of states in the gadgets of states in $C$.   
There are two cases.
\begin{itemize}
\item    
If there is at least one state $((\wh{s}, 2\ell-2),{ m})$,
with $\ell \leq r$ on the cycle, then the minimum priority on the 
cycle is even, as odd priorities strictly smaller than $2r+1$ are not visited
by the construction as $C$ does not contain states of odd priorities 
strictly smaller than $2r+1$.
\item    
Otherwise, in all states choices shortening the distance to the
state with priority $2r$ are taken and hence the cycle must contain a 
priority $2r$ state and all other priorities on the cycle are 
$\geq 2r$, so $2r$ is the minimum priority on the cycle. 
\end{itemize}
Thus we obtain cycles winning for player~3, and this contradicts that 
$\ov{\strab}$ is a sure winning strategy for player~2 from 
$\ov{U}_2^{\ov{\straa}}$.
Thus it follows that all recurrent classes in $U_2^\straa$ in the Markov
chain $G_{\straa,\strab}$ are winning for player~2.

\medskip\noindent{\em Not almost-sure reachability to ${U}_1^{{\straa}}$.} 
We now argue that given $\straa$ and $\strab$ there exists no state in 
$U_2^\straa$ such that $U_1^\straa$ is reached almost-surely. 
This would ensure that from all states in $U_2^\straa$ some recurrent class in 
$U_2^\straa$ is reached with positive probability and establish the desired
claim since we have already shown that all recurrent classes in $U_2^\straa$ 
are winning for player~2. 
%This would ensure that some end component in $U_2^\straa$ is executed infinitely often with positive probability.
Given $\straa$ and $\strab$, let $X \subseteq U_2^\straa$ be the set of states 
such that the set $U_1^\straa$ is reached almost-surely from $X$, and assume 
towards contradiction that $X$ is non-empty.
This implies that from every state in $X$, in the Markov chain 
$G_{\straa,\strab}$, there is a path to the set $U_1^\straa$,
and from all states in $X$ the successors are in $X$. 
We construct a strategy $\ov{\strac}$ in the three-player game $\ov{G}_{\ov{\straa}}$
against strategy $\ov{\strab}$ exactly as the strategy constructed for winning bottom scc, 
with the following difference: instead of shortening distance the a fixed 
state of priority $2r$ (as for winning bottom scc's), in this case
the strategy $\ov{\strac}$ shortens distance to $\ov{U}_1^{\ov{\straa}}$.
Formally, given $X$,  the strategies $\ov{\straa}$ and $\ov{\strab}$, 
we construct a strategy $\ov{\strac}$ for player~3 in $\ov{G}$ as follows: 
For a probabilistic state $(s,m)$ in $X$:
\begin{itemize}
\item if $\ov{\strab}$ chooses
a state $((\wt{s},2\ell),m')$, with $\ell\geq 1$, 
then $\ov{\strac}$ chooses the state $((\wh{s},2\ell-1),m')$, 
and then a successor to shorten the distance to the set 
$\ov{U}_1^{\ov{\straa}}$ (such a successor exists since from all states 
in $X$ the set  $\ov{U}_1^{\ov{\straa}}$ is reachable).
\end{itemize}
Against the strategy of player~3 in $\ov{G}_{\ov{\straa}}$ either 
(i)~$\ov{U}_1^{\ov{\straa}}$ is reached in finitely many steps, or 
(ii)~else player~2 infinitely often chooses successor states of the 
form $(\wt{s},0)$ with priority~0 (the minimum even priority), i.e., 
there is a cycle with a state $(\wt{s},0)$ which has priority~0.
If priority~0 is visited infinitely often, then the parity objective is satisfied.
This ensures that in $\ov{G}_{\ov{\straa}}$ player~3 can ensure either to 
reach $\ov{U}_1^{\ov{\straa}}$ in finitely many steps from some state in $\ov{U}_2^{\ov{\straa}}$ 
against $\ov{\strab}$, or the parity objective is satisfied without reaching 
$\ov{U}_1^{\ov{\straa}}$.
In either case this implies that against $\ov{\strab}$ player~3 can ensure to 
satisfy the parity objective (by reaching $\ov{U}_1^{\ov{\straa}}$ in finitely many 
steps and then playing a sure winning strategy from $\ov{U}_1^{\ov{\straa}}$, 
or satisfying the parity objective without reaching $\ov{U}_1^{\ov{\straa}}$ 
by visiting priority~0 infinitely often) 
from some  state in $\ov{U}_2^{\ov{\straa}}$, 
contradicting that $\ov{\strab}$ is a sure winning strategy for player~2 from $\ov{U}_2^{\ov{\straa}}$. 
Thus we have a contradiction, and obtain the desired result.
\hfill\qed
\end{proof}           

\medskip\noindent 
Lemma~\ref{lemm:par-reduction1} establishes the desired correctness result as 
follows:
(1)~If $\ov{\straa}$ is a finite-memory strategy such that in $\ov{G}_{\ov{\straa}}$
player~3 has a sure winning strategy, then by part~1 of Lemma~\ref{lemm:par-reduction1} we obtain 
that $\straa=\tras(\ov{\straa})$ is almost-sure winning.
(2)~Conversely, if $\straa$ is a finite-memory almost-sure winning strategy,
then consider a strategy $\ov{\straa}$ such that $\straa=\tras(\ov{\straa})$ 
(i.e., $\ov{\straa}=\tras^{-1}(\straa)$). 
By part~2 of Lemma~\ref{lemm:par-reduction1}, given the finite-memory strategy $\ov{\straa}$,
player~3 must have a sure winning strategy in $\ov{G}_{\ov{\straa}}$, otherwise
we have a contradiction that $\straa$ is almost-sure winning.
Thus we have the following theorem.

\begin{theorem}[Polynomial reduction]\label{thrm:reduction_complexity}
Given a partial-observation stochastic game graph $G$ with a parity objective 
$\Parity(p)$ for player~1, we construct a three-player game 
$\ov{G}=\tras(G)$ with a parity objective $\Parity(\ov{p})$, 
where player~1 has partial-observation and the other two players have 
perfect-observation, in time $O((n+m)\cdot d)$, where $n$ 
is the number of states of the game, $m$ is the number of transitions, 
and $d$ the number of priorities of the priority function $p$, such that 
the following assertion  holds:
there is a finite-memory almost-sure winning strategy $\straa$ for player~1 
in $G$ iff there exists a finite-memory strategy $\ov{\straa}$ for player~1 
in $\ov{G}$ such that in the game $\ov{G}_{\ov{\straa}}$ obtained given
$\ov{\straa}$, player~3 has a sure winning strategy for $\Parity(\ov{p})$.
The game graph $\tras(G)$ has $O(n\cdot d)$ states, $O(m\cdot d)$ transitions,
and $\ov{p}$ has at most $d+1$ priorities. 
\end{theorem}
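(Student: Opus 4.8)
The plan is to read the statement as a packaging of Lemma~\ref{lemm:par-reduction1} together with an elementary size analysis of the gadget construction $\tras$: the real work — relating end components and recurrent classes of the stochastic game to cycles that player~3 controls in $\ov{G}$ — is exactly what the two parts of Lemma~\ref{lemm:par-reduction1} provide, so what remains is a size/time count and a careful matching of initial states and of finite-memory transducers.

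First I would verify the complexity and size bounds by inspecting the construction. Each nonprobabilistic state of $G$ becomes a single state of $\ov{G}$ with the same priority and the same outgoing edges (item~1 of the construction), contributing $O(n)$ states and $O(m)$ edges. For a probabilistic state $s$ with $p(s)\le d$, the gadget of Figure~\ref{fig:gadget-even}/\ref{fig:gadget-odd} introduces $\ov{s}$, the $O(d)$ states $(\wt{s},2k)$, and the $O(d)$ states $(\wh{s},k)$ — that is $O(d)$ states — with $O(d)$ internal edges, while each of the $O(d)$ states $(\wh{s},k)$ has one outgoing edge per $E$-successor of $s$, i.e.\ $O(|E(s)|\cdot d)$ edges. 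Summing over $s\in\VR$ yields $O(n\cdot d)$ states and $O(m)+\sum_{s\in\VR}O(|E(s)|\cdot d)=O(m\cdot d)$ transitions, and the whole of $\tras(G)$ can be written out in time linear in its size, i.e.\ $O((n+m)\cdot d)$. Finally the new priorities assigned in the gadgets lie in $\{0,1,\ldots,p(s)+1\}$ and the original priorities are preserved, so $\ov{p}$ uses at most $d+1$ priorities.

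Next I would establish the equivalence using the observation-sequence bijection $\ov{h}$ and the strategy maps $\tras,\tras^{-1}$. For the direction from $\ov{G}$ to $G$: given a finite-memory $\ov{\straa}$ such that player~3 has a sure winning strategy in $\ov{G}_{\ov{\straa}}$ from the initial state — equivalently the initial product state lies in $\ov{U}_1^{\ov{\straa}}$ — set $\straa=\tras(\ov{\straa})$. This $\straa$ is observation-based and finite-memory with the same memory set $\mem$ (its memory update is that of $\ov{\straa}$ read through $\ov{h}$, which only splices in the deterministic extra gadget steps and adds no memory). Part~1 of Lemma~\ref{lemm:par-reduction1} then gives $\Prb_{s}^{\straa,\strab}(\Parity(p))=1$ against every player-2 strategy $\strab$, so $\straa$ is finite-memory almost-sure winning. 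For the converse: given a finite-memory almost-sure winning $\straa$ in $G$, take $\ov{\straa}=\tras^{-1}(\straa)$, which is finite-memory and observation-based with $\tras(\ov{\straa})=\straa$. If player~3 did \emph{not} have a sure winning strategy in $\ov{G}_{\ov{\straa}}$ from the initial state, then that state lies in $U_2^{\straa}$, and part~2 of Lemma~\ref{lemm:par-reduction1} supplies a player-2 strategy $\strab$ with $\Prb_{s}^{\straa,\strab}(\Parity(p))<1$, contradicting almost-sure winning of $\straa$; hence player~3 does have such a sure winning strategy, and this $\ov{\straa}$ witnesses the right-hand side.

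The one point that needs care — rather than a genuine obstacle, since Lemma~\ref{lemm:par-reduction1} carries the load — is to check that $\tras$ and $\tras^{-1}$ really do preserve observation-basedness and the finite memory set $\mem$ (the extra steps introduced per probabilistic gadget are absorbed by the bijection $\ov{h}$ on observation sequences), and that ``player~3 sure-wins in $\ov{G}_{\ov{\straa}}$ from the initial state'' is literally the assertion ``the initial product state belongs to $\ov{U}_1^{\ov{\straa}}$'', so that the two items of Lemma~\ref{lemm:par-reduction1} are being applied to exactly the state the theorem is about. The positive-winning variant follows from the analogous reduction and the same argument.
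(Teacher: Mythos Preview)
Your proposal is correct and follows essentially the same approach as the paper: the paper derives the theorem directly from the two parts of Lemma~\ref{lemm:par-reduction1} via the strategy maps $\tras$ and $\tras^{-1}$, exactly as you do, and the size bounds are implicit in the gadget construction you spell out. Your version is in fact more explicit than the paper's about the $O((n+m)\cdot d)$ accounting and about $\tras,\tras^{-1}$ preserving finite memory, but the argument is the same.
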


\begin{remark}[Positive winning]
We have presented the details of the polynomial reduction for almost-sure winning,
and now we discuss how a very similar reduction works for positive winning.
We explain the key steps, and omit the proof as it is very similar to our proof
for almost-sure winning.
For clarity in presentation we use a priority $-1$ in the reduction, which is the least odd 
priority, and visiting the priority $-1$ infinitely often ensures loosing for player~1.
Note that all priorities can be increased by~2 to ensure that priorities are nonnegative,
but we use the priority~$-1$ as it keeps the changes in the reduction for positive winning 
minimal as compared to almost-sure winning.

\smallskip\noindent{\em Key steps.} 
First we observe that in the reduction gadgets for almost-sure winning, player~2 would never 
choose the leftmost edge to state $(\wt{s},0)$ from $\ov{s}$ in the cycles formed, but only
use them for reachability to cycles. 
Intuitively, the leftmost edge corresponds to edges which must be chosen only finitely often
and ensures positive reachability to the desired end components in the stochastic game.
For positive winning these edges need to be in control of player~3, but must be allowed to 
be taken only finitely often.
Thus for positive winning, the gadget is modified as follows: 
(i)~we omit the leftmost edge from the state $\ov{s}$;
(ii)~we add an additional player-3 state $\wh{s}$ in the beginning, which has an edge to $\ov{s}$ 
and an edge to $(\wh{s},0)$; 
and (iii)~the state $(\wh{s},0)$ is assigned priority~$-1$.
Figure~\ref{fig:gadget-positive} presents a pictorial illustration of the 
gadget of the reduction for positive winning.
Note that in the reduction for positive winning the finite reachability through the leftmost edge
is in control of player-3, but it has the worst odd priority and must be used only finitely often.
This essentially corresponds to reaching winning end components in finitely many steps in the stochastic
game.
In the game obtained after the reduction, the three-player game is surely winning iff player~1 has a 
finite-memory positive winning strategy in the partial-observation stochastic game.
\begin{figure}[!ptb]
  \begin{center}
    \hrule
    \begin{picture}(122,90)(0,0)
%\put(0,0){\framebox(122,70){}}
\put(-3,0)
{

\gasset{Nw=9,Nh=9,Nmr=4.5,rdist=1, loopdiam=6}

%\node[Nmarks=i, Nmr=0](q0)(10,20){$q_0$}
%\node[Nmarks=n](q1)(30,30){$q_1$}
%\rpnode[Nmarks=n](r1)(50,30)(4,3.5){}

\node[Nmarks=n](ws)(47.5,85){$\widehat{s}$}
\nodelabel[ExtNL=y, NLangle=5, NLdist=1](ws){$p(s)$}

\node[Nmarks=n, Nmr=0](s)(47.5,65){$\ov{s}$}
\nodelabel[ExtNL=y, NLangle=5, NLdist=1](s){$p(s)$}

\node[Nframe=n](dots)(58,58){$\dots$}

\node[Nmarks=n](t0)(10,45){$(\wt{s},0)$}
\nodelabel[ExtNL=y, NLangle=0, NLdist=1](t0){$p(s)$}
\node[Nmarks=n](t2)(32.5,45){$(\wt{s},2)$}
\nodelabel[ExtNL=y, NLangle=0, NLdist=1](t2){$p(s)$}
\node[Nmarks=n](t4)(62.5,45){$(\wt{s},4)$}
\nodelabel[ExtNL=y, NLangle=0, NLdist=1](t4){$p(s)$}

\node[Nframe=n](dots)(84,45){$\dots$}

\node[Nmarks=n, Nadjust=w, Nmr=8](tp)(105,45){$(\wt{s},p(s))$}
\nodelabel[ExtNL=y, NLangle=0, NLdist=1](tp){$p(s)$}

\node[Nmarks=n, Nmr=0](u0)(10,25){$(\wh{s},0)$}
\nodelabel[ExtNL=y, NLangle=0, NLdist=1](u0){$-1$}
\node[Nmarks=n](u1)(25,25){$(\wh{s},1)$}
\nodelabel[ExtNL=y, NLangle=0, NLdist=1](u1){$1$}
\node[Nmarks=n, Nmr=0](u2)(40,25){$(\wh{s},2)$}
\nodelabel[ExtNL=y, NLangle=0, NLdist=1](u2){$2$}
\node[Nmarks=n](u3)(55,25){$(\wh{s},3)$}
\nodelabel[ExtNL=y, NLangle=0, NLdist=1](u3){$3$}
\node[Nmarks=n, Nmr=0](u4)(70,25){$(\wh{s},4)$}
\nodelabel[ExtNL=y, NLangle=0, NLdist=1](u4){$4$}

\node[Nframe=n](dots)(82,25){$\dots$}

\node[Nmarks=n, Nadjust=w](up)(95,25){$(\wh{s},p(s)\!-\!1)$}
\nodelabel[ExtNL=y, NLangle=325, NLdist=0](up){$p(s)\!-\!1$}

\node[Nmarks=n, Nmr=0, Nadjust=w](uq)(120,25){$(\wh{s},p(s))$}
\nodelabel[ExtNL=y, NLangle=325, NLdist=0](uq){$p(s)$}

\node[Nframe=n, Nw=4, Nh=4, Nmarks=n](dummy)(5,5){$\cdot$}
\drawedge[ELpos=50, ELside=r, curvedepth=0](u0,dummy){}

\node[Nframe=n, Nw=5, Nh=5, Nmarks=n](dummy)(15,5){$\cdot$}
\drawedge[ELpos=50, ELside=r, curvedepth=0](u0,dummy){}

\node[Nmarks=n, Nframe=n](label)(10,3){$E(s)$}

\node[Nframe=n, Nw=4, Nh=4, Nmarks=n](dummy)(20,5){$\cdot$}
\drawedge[ELpos=50, ELside=r, curvedepth=0](u1,dummy){}

\node[Nframe=n, Nw=5, Nh=5, Nmarks=n](dummy)(30,5){$\cdot$}
\drawedge[ELpos=50, ELside=r, curvedepth=0](u1,dummy){}

\node[Nmarks=n, Nframe=n](label)(25,3){$E(s)$}

\node[Nframe=n, Nw=4, Nh=4, Nmarks=n](dummy)(35,5){$\cdot$}
\drawedge[ELpos=50, ELside=r, curvedepth=0](u2,dummy){}

\node[Nframe=n, Nw=5, Nh=5, Nmarks=n](dummy)(45,5){$\cdot$}
\drawedge[ELpos=50, ELside=r, curvedepth=0](u2,dummy){}

\node[Nmarks=n, Nframe=n](label)(40,3){$E(s)$}

\node[Nframe=n, Nw=4, Nh=4, Nmarks=n](dummy)(50,5){$\cdot$}
\drawedge[ELpos=50, ELside=r, curvedepth=0](u3,dummy){}

\node[Nframe=n, Nw=5, Nh=5, Nmarks=n](dummy)(60,5){$\cdot$}
\drawedge[ELpos=50, ELside=r, curvedepth=0](u3,dummy){}

\node[Nmarks=n, Nframe=n](label)(55,3){$E(s)$}

\node[Nframe=n, Nw=4, Nh=4, Nmarks=n](dummy)(65,5){$\cdot$}
\drawedge[ELpos=50, ELside=r, curvedepth=0](u4,dummy){}

\node[Nframe=n, Nw=5, Nh=5, Nmarks=n](dummy)(75,5){$\cdot$}
\drawedge[ELpos=50, ELside=r, curvedepth=0](u4,dummy){}

\node[Nmarks=n, Nframe=n](label)(70,3){$E(s)$}

\node[Nframe=n, Nw=4, Nh=4, Nmarks=n](dummy)(90,5){$\cdot$}
\drawedge[ELpos=50, ELside=r, curvedepth=0](up,dummy){}

\node[Nframe=n, Nw=5, Nh=5, Nmarks=n](dummy)(100,5){$\cdot$}
\drawedge[ELpos=50, ELside=r, curvedepth=0](up,dummy){}

\node[Nmarks=n, Nframe=n](label)(95,3){$E(s)$}

\node[Nframe=n, Nw=4, Nh=4, Nmarks=n](dummy)(115,5){$\cdot$}
\drawedge[ELpos=50, ELside=r, curvedepth=0](uq,dummy){}

\node[Nframe=n, Nw=5, Nh=5, Nmarks=n](dummy)(125,5){$\cdot$}
\drawedge[ELpos=50, ELside=r, curvedepth=0](uq,dummy){}

\node[Nmarks=n, Nframe=n](label)(120,3){$E(s)$}

\drawedge[ELpos=50, ELside=r, curvedepth=0](ws,t0){}
\drawedge[ELpos=50, ELside=r, curvedepth=0](ws,s){}

\drawedge[ELpos=50, ELside=l, curvedepth=0](s,t2){}

\drawedge[ELpos=50, ELside=r, curvedepth=0](s,t4){}
\drawedge[ELpos=50, ELside=l, curvedepth=0](s,tp){}

\drawedge[ELpos=50, ELside=r, curvedepth=0](t0,u0){}

\drawedge[ELpos=50, ELside=r, curvedepth=0](t2,u1){}
\drawedge[ELpos=50, ELside=r, curvedepth=0](t2,u2){}

\drawedge[ELpos=50, ELside=r, curvedepth=0](t4,u3){}
\drawedge[ELpos=50, ELside=r, curvedepth=0](t4,u4){}

\drawedge[ELpos=50, ELside=r, curvedepth=0](tp,up){}
\drawedge[ELpos=50, ELside=r, curvedepth=0](tp,uq){}
}
%\drawedge[dash={1}0](n3bis,nkbis){$0,1$}

\end{picture}
    \hrule
    \caption{Reduction gadget for positive winning when $p(s)$ is even. \label{fig:gadget-positive}}
  \end{center}
\end{figure}
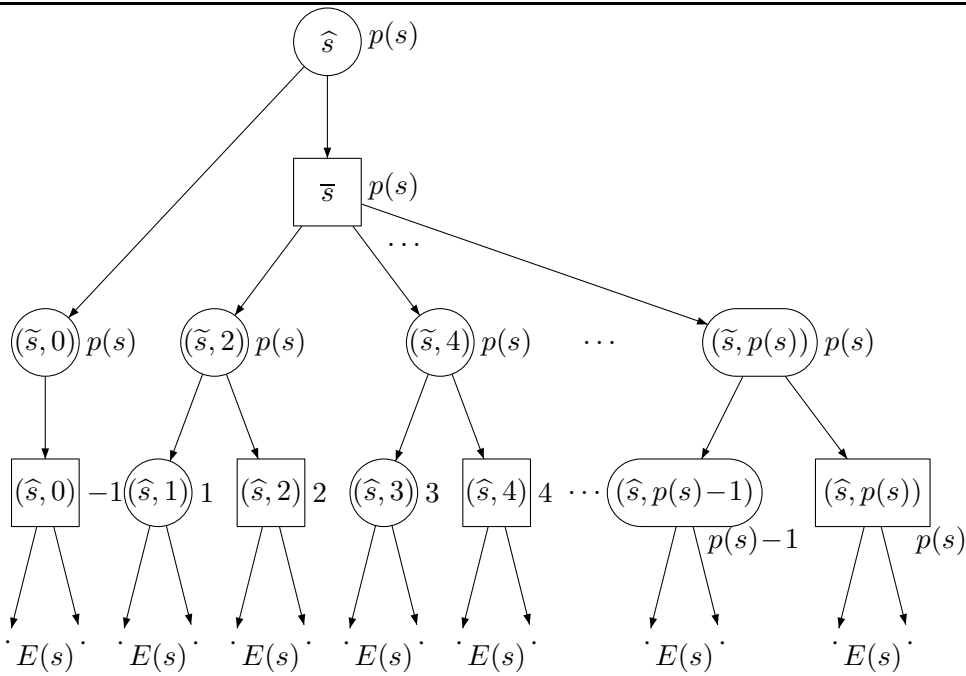
\end{remark}

In this section we established polynomial reductions of the 
qualitative-analysis problems for partial-observation stochastic parity games under
finite-memory strategies to the sure winning problem in three-player games
(player~1 partial, both the other players perfect, and 
player~1 and 3 existential, player~2 adversarial).
The following section shows that the sure winning problem for three-player games
is EXPTIME-complete by reduction to alternating parity tree automata.

\newcommand{\nat}{\mathbb N} 
\newcommand{\A}{\mathcal{A}}
\newcommand{\good}{{\sf good}}
\newcommand{\bad}{{\sf bad}}
\newcommand{\Alphabet}{{\Omega}}
\newcommand{\B}{{\mathcal{B}}}
\newcommand{\N}{\mathcal{N}}

\newcommand{\true}{{\sf true}}
\newcommand{\false}{{\sf false}}

\section{Solving Sure Winning for Three-player Parity Games}
In this section we present the solution for sure winning in three-player
non-stochastic parity games.
We start with the basic definitions.

\subsection{Basic definitions}
We first present a model of partial-observation concurrent three-player games, 
where player~$1$ has partial observation, and player~$2$ and player~$3$ 
have perfect observation. 
Player~$1$ and player~$3$ have the same objective and they play against player~$2$.
We also show that three-player turn-based games model (of 
Section~\ref{sec:partial_stoch}) can be treated as a special case of this model.

\smallskip\noindent{\em Partial-observation three-player concurrent games.}
Given alphabets $A_i$ of actions for player~$i$ ($i=1,2,3$), a partial-observation 
three-player concurrent game (for brevity, \emph{three-player game} in sequel) 
is a tuple $G = \tuple{S, s_0, \delta,\Obs,\obs}$ where:
\begin{itemize}
\item $S$ is a finite set of states;
%\item $A_i$ is the finite alphabet of player~$i$ ($i=1,2,3$);
\item $s_0 \in S$ is the initial state;
\item $\delta: S \times A_1 \times A_2 \times A_3 \to S$ is a deterministic 
transition function that, given a current state $s$, and actions $a_1 \in A_1$, $a_2 \in A_2$, $a_3 \in A_3$
of the players, gives the successor state $s' = \delta(s,a_1,a_2,a_3)$ of $s$; and
\item $\Obs$ is a finite set of observations and $\obs$ is the observation mapping 
(as in Section~\ref{sec:partial_stoch}).
\end{itemize}

\smallskip\noindent{\em Modeling turn-based games.}
A three-player turn-based game is a special case of the model three-player
concurrent games.
Formally, we consider a three-player turn-based game as a tuple 
$\tuple{S_1,S_2,S_3, A_1, \delta, E}$ where $\delta: S_1 \times A_1 \to S_2$ 
is the transition function for player~$1$,
and $E \subseteq (S_2 \cup S_3) \times S$ is a set of edges.
Since player~$2$ and player~$3$ have perfect observation, we consider 
that $A_2 = S$ and $A_3 = S$, that is player~$2$ and player~$3$ choose directly a successor in the game.
The transition function $\overline{\delta}$ for an equivalent concurrent version is as follows
(i)~for $s \in S_1$,  for all $a_2 \in A_2$ and $a_3 \in A_3$, we have 
$\overline{\delta}(s,a_1,a_2,a_3)=\delta(s,a_1)$;
(ii)~for $s \in S_2$,  for all $a_1 \in A_1$ and $a_3 \in A_3$, for $a_2=s'$ 
we have $\overline{\delta}(s,a_1,a_2,a_3)=s'$  if $(s,s')\in E$, else 
$\overline{\delta}(s,a_1,a_2,a_3)=s_{\good}$, where $s_{\good}$ is a special state
in which player~$2$ loses (the objective of player~$1$ and~$3$ is satisfied
if player~$2$ chooses an edge that is not in $E$); 
and  
(iii)~for $s \in S_3$,  for all $a_1 \in A_1$ and $a_2 \in A_2$, for $a_3=s'$ 
we have $\overline{\delta}(s,a_1,a_2,a_3)=s'$  if $(s,s')\in E$, else 
$\overline{\delta}(s,a_1,a_2,a_3)=s_{\bad}$, where $s_{\bad}$ is a special state
in which player~$2$ wins (the objective of player~$1$ and~$3$ is violated
if player~$3$ chooses an edge that is not in $E$).
The set $\Obs$ and the mapping $\obs$ are obvious.

\begin{comment}
\smallskip\noindent{\em Observations.}
Given a set $\Obs_1 \subseteq 2^S$ of observations (for player~$1$) that forms a
partition of $S$ (i.e., $\Obs_1$ is a set of non-empty and non-overlapping 
subsets of $S$, and their union covers $S$), let $\obs_1: S \to \Obs_1$ be the
function that assigns to each state $s \in S$ the (unique) observation to which 
it belongs, i.e. such that $s \in \obs_1(s)$.
The function $\obs_1$ is extended to sequences $\rho = s_0 \dots s_n$ of states 
in the natural way, namely $\obs_1(\rho) = \obs_1(s_0) \dots \obs_1(s_n)$.
\end{comment}

\smallskip\noindent{\em Strategies.}
Define the set $\Straa$ of \emph{strategies} $\straa: \Obs^+ \to A_1$ of player~$1$ that, 
given a sequence of past observations, return an action for player~$1$. Equivalently, we sometimes
view a strategy of player~$1$ as a function $\straa: S^+ \to A_1$ satisfying
$\straa(\rho) = \straa(\rho')$ for all $\rho,\rho' \in S^+$ such that $\obs(\rho) = \obs(\rho')$,
and say that $\straa$ is \emph{observation-based}.
A strategy of player~$2$ (resp, player~$3$) is a function $\strab: S^+ \to A_2$ 
(resp., $\strac: S^+ \to A_3$) without any restriction. 
We denote by $\Strab$ and $\Strac$ the set of strategies of player~$2$ and 
player~$3$, respectively.

\smallskip\noindent{\em Sure winning.}
Given strategies $\straa$, $\strab$, $\strac$ of the three players in $G$, 
the \emph{outcome play} from $s_0$ is the infinite sequence $\rho^{\straa,\strab,\strac}_{s_0} = s_0 s_1 \dots$
such that for all $j \geq 0$, we have 
$s_{j+1} = \delta(s_j,a_j,b_j,c_j)$ where $a_j = \straa(s_0 \dots s_j)$,
$b_j = \strab(s_0 \dots s_j)$, and $c_j = \strac(s_0 \dots s_j)$.
Given a game $G = \tuple{S, s_0, \delta,\Obs,\obs}$ and a parity objective $\varphi \subseteq S^{\omega}$,
the sure winning problem asks to decide if $\exists \straa \in \Straa \cdot
\exists \strac \in \Strac \cdot \forall \strab \in \Strab: 
\rho^{\straa,\strab,\strac}_{s_0} \in \varphi$.
It will follow from our result that if the answer to the sure winning problem
is yes, then there exists a witness finite-memory strategy $\straa$ for 
player~1.

\subsection{Alternating Tree Automata}
In this section we recall the definitions of alternating tree automata,
and present the solution of the sure winning problem for three-player games 
with parity objectives by a reduction to the emptiness problem of alternating tree 
automata with parity acceptance condition. 

\smallskip\noindent{\em Trees.}
%We follow some definitions and notation of~\cite{KVW00}. 
%Given a finite
%sequence $w = s_0 \dots s_n \in \Alphabet^+$ over a finite set $\Alphabet$, 
%let $\last(w) = s_n$ be the last element of $w$.
%
Given an alphabet $\Alphabet$, 
an $\Alphabet$-labeled tree $(T,V)$ consists of a prefix-closed 
set $T \subseteq \nat^*$ (i.e., if $x\cdot d \in T$ with $x \in \nat^*$
and $d \in \nat$, then $x \in T$), and a mapping $V: T \to \Alphabet$ that assigns
to each node of $T$ a letter in $\Alphabet$.
Given $x \in \nat^*$ and $d \in \nat$ such that $x \cdot d \in T$, 
we call $x \cdot d$ the \emph{successor} in direction $d$ of $x$. 
%The \emph{successors} of a node $x \in T$ are the nodes $x \cdot d \in T$ such that $d \in \nat$.
%The degree $\deg(x)$ of a node $x \in T$ is the number of successors of $x$ in $T$. 
The node~$\varepsilon$ is the \emph{root} of the tree.
%The \emph{path-label} of a node $x = d_1 d_2 \dots d_n$ is the sequence
%of labels $V^*(x) = V(\varepsilon) V(d_1) V(s_1 d_2) \dots V(d_1 d_2 \dots d_n)$
%that labels the nodes on the path from the root to $x$ (hence $\last(V^*(x)) = V(x)$).
An \emph{infinite path} in $T$ is an infinite sequence $\pi = d_1 d_2 \dots$
of directions $d_i \in \nat$ such that every finite prefix of $\pi$ is a node in~$T$.
%A \emph{path} of $(T,V)$ is a prefix-closed set $\pi \subseteq T$ such that 
%for every $x \in \pi$, there exists a unique direction $d \in \nat$ 
%such that $x \cdot d \in \pi$.

\smallskip\noindent{\em Alternating tree automata.}
Given a parameter $k \in \nat \setminus \{0\}$, we consider input trees of rank 
$k$, i.e. trees in which every node has at most $k$ successors. Let $[k] = \{0,\dots,k-1\}$,
and given a finite set $U$, let $\B^+(U)$ be the set of positive Boolean formulas 
over $U$, i.e. formulas built from elements in $U \cup \{\true,\false\}$ using 
the Boolean connectives $\land$ and $\lor$.
%We present a definition of alternating tree automata (see e.g.~\cite{MS87,KVW00}) 
%with the syntactic restriction that the states are associated to a fixed direction 
%in the input tree. The restriction is for the sake of simplifying the presentation, 
%and does not reduce the expressiveness of the class of automata (i.e., they recognize
%the regular languages of infinite trees with fixed finite rank).
%
%\mynote{L: notations: $\Alphabet$ for alphabet vs. set of strategies.}
%\mynote{L: notations: $A$ for automaton vs. set of actions.}
An \emph{alternating tree automaton} over alphabet $\Alphabet$ is a tuple 
$\A = \tuple{S,s_0,\delta}$ where:
\begin{itemize}
\item $S$ is a finite set of states;
%\item $A_i$ is the finite alphabet of player~$i$ ($i=1,2,3$);
\item $s_0 \in S$ is the initial state;
\item $\delta: S \times \Alphabet \to \B^+(S \times [k])$ is a transition function.
%\item $\dir: S \to \{0,\dots,k-1\}$ associates a fixed direction to each state.
\end{itemize}

Intuitively, the automaton is executed from the initial state $s_0$ and reads
the input tree in a top-down fashion starting from the root~$\varepsilon$. In state $s$,
if $a \in \Alphabet$ is the letter that labels the current node $x$ of the input tree,
the behavior of the automaton is given by the formulas $\psi = \delta(s,a)$. 
The automaton chooses a \emph{satisfying assignment} of $\psi$,
i.e. a set $Q \subseteq S \times [k]$ such that the formula $\psi$ is satisfied when
the elements of $Q$ are replaced by $\true$, and the elements of $(S \times [k]) \setminus Q$
are replaced by $\false$. Then, for each $\tuple{s_1,d_1} \in Q$ a copy of the automaton is spawned 
in state $s_1$, and proceeds to the node $x \cdot d_1$ of the input tree. 
In particular, it requires that $x \cdot d_1$ belongs to the input tree. 
For example, if $\delta(s,a) = (\tuple{s_1,0} \land \tuple{s_2,0}) \lor (\tuple{s_3,0} \land \tuple{s_4,1} \land \tuple{s_5,1})$,
then the automaton should either spawn two copies that process the successor
of~$x$ in direction~$0$~(i.e., the node $x \cdot 0$) and that enter the respective 
states~$s_1$ and~$s_2$, or spawn three copies of which one processes $x \cdot 0$
and enters state~$s_3$, and the other two process $x \cdot 1$ and enter 
the states~$s_4$ and~$s_5$ respectively. 

%In a standard definition of alternating tree automata~\cite{MS87,KVW00}, there is no fixed 
%direction associated to each state of the automaton. Rather the transition function can specify
%a direction to proceed along with each state to enter (the transition relation is
%then of the form $\delta: S \times \Alphabet \to \B^+(S \times \{0,\dots,k-1\})$. And it is possible
%to specify several directions along with the same state, for instance $(s_1,0) \land (s_1,1)$
%requires that the automaton spawn two copies in state $s_1$, one that proceeds
%direction $0$ in the input tree, and one that proceeds direction $1$.
%Hence our definition can be viewed as a syntactic restriction of the standard
%definition. However, the two definitions are equally powerful as alternating tree 
%automata of the standard definition can be encoded in our definition as follows.
%For each state $s$, construct $k$ copies $(s,0), (s,1), \dots, (s,k-1)$ of $s$
%(i.e., the transition relation in each copy is the same as in $s$),
%and assign direction $\dir(s,d) = d$ for each $0 \leq d < k$.

\smallskip\noindent{\em Runs.}
A run of $\A$ over an $\Alphabet$-labeled input tree $(T,V)$ is a tree $(T_r,r)$
labeled by elements of $T \times S$, where a node of $T_r$ labeled by $(x,s)$
corresponds to a copy of the automaton proceeding the node~$x$ of the 
input tree in state $s$. Formally, a \emph{run} of $\A$ over an input tree $(T,V)$ is a 
$(T \times S)$-labeled tree $(T_r,r)$ such that $r(\varepsilon) = (\varepsilon,s_0)$
and for all $y \in T_r$, if $r(y) = (x,s)$,
then the set $\{\tuple{s',d'} \mid \exists d \in \nat: r(y \cdot d) = (x \cdot d', s')\}$ 
is a satisfying assignment for $\delta(s,V(x))$.
Hence we require that, given a node $y$ in $T_r$ labeled by $(x,s)$, there is a
satisfying assignment $Q \subseteq S \times [k]$ for the formula $\delta(s,a)$ where
$a = V(x)$ is the letter labeling the current node $x$ of the input tree, 
and for all states $\tuple{s',d'} \in Q$ there is a (successor) node $y \cdot d$ in 
$T_r$ labeled by $(x \cdot d', s')$.

Given an accepting condition $\varphi \subseteq S^{\omega}$, we say that a run $(T_r,r)$
is \emph{accepting} if for all infinite paths $d_1 d_2 \dots$ of $T_r$, 
the sequence $s_1 s_2 \dots$ such that $r(d_i) = (\cdot,s_i)$ for all $i\geq 0$
is in $\varphi$. %\mynote{L: add "from the root" if we consider non-tail objectives}
The \emph{language} of $\A$ is the set $L_k(\A)$ of all input trees of rank $k$ 
over which there exists an accepting run of $\A$. The emptiness problem for alternating
tree automata is to decide, given $\A$ and parameter $k$, whether $L_k(\A) = \emptyset$.

%\item

\subsection{Solution of the Sure Winning Problem for Three-player Games}
We now present the solution of the sure winning problem for three-player games.

\begin{theorem}
Given a three-player game $G = \tuple{S, s_0, \delta,\Obs,\obs}$ and a \{safety, reachability, parity\} 
objective $\varphi$, the problem of deciding whether 
$$\exists \straa \in \Straa \cdot \exists \strac \in \Strac \cdot \forall \strab \in \Strab : \rho^{\straa,\strab,\strac}_{s_0} \in \varphi$$
is EXPTIME-complete.
\end{theorem}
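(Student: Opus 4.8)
The plan is to establish the two bounds separately: EXPTIME-membership by a reduction to the emptiness problem of alternating parity tree automata, and EXPTIME-hardness by a reduction from a known EXPTIME-hard problem.

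\medskip\noindent\textbf{Upper bound.}
First I would construct, from a three-player game $G=\tuple{S,s_0,\delta,\Obs,\obs}$ with parity objective $\varphi$, an alternating parity tree automaton $\A$ whose input trees encode observation-based strategies of player~1 and whose acceptance captures the existence of a winning player~3 response against all player~2 strategies. The input tree has rank $k=\abs{\Obs}$ (or $\abs{A_1}$, whichever is cleaner given the fixed-period turn structure), and a node at a given depth, labeled by an action in $A_1$, represents the action that the player-1 strategy prescribes after the observation sequence read along the branch; since the tree is indexed by observation histories, any such tree automatically yields an observation-based player-1 strategy, and conversely. The automaton's state space is essentially $S$ together with a bit of bookkeeping for the turn structure; in a player-2 state the transition function uses a conjunction over all outgoing edges (player~2 is adversarial, so the automaton must accept for every choice), in a player-3 state a disjunction over outgoing edges (player~3 is cooperative), and in a player-1 state it moves in the direction of the tree dictated by the input letter. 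The acceptance condition of $\A$ is the parity condition $\ov{p}$ lifted to runs. Then $L_k(\A)\neq\emptyset$ iff there is an observation-based player-1 strategy $\straa$ such that in $\ov{G}_{\straa}$ player~3 wins the parity game against player~2, which is exactly the sure-winning problem. Emptiness of alternating parity tree automata is solvable in EXPTIME (via the standard simulation by nondeterministic parity tree automata and then a parity-game/emptiness check, or directly), and $\A$ has polynomially many states and priorities, so we get EXPTIME membership; moreover the emptiness procedure returns a finite-memory (indeed regular) witness strategy, which gives the final remark that a finite-memory $\straa$ suffices. For safety and reachability objectives the same construction with a Büchi/co-Büchi (or weak) acceptance condition works and still lands in EXPTIME.

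\medskip\noindent\textbf{Lower bound.}
For EXPTIME-hardness it suffices to prove it for the safety objective, since safety is a special case of parity (priorities in $\{0,1\}$ with the ``bad'' absorbing state given odd priority). The natural source is the membership problem for alternating polynomial-space Turing machines, or more conveniently the already-established EXPTIME-hardness of sure/almost-sure winning in two-player \emph{partial-observation} (non-stochastic) safety/reachability games, e.g.\ the reduction used for POMDPs and partial-observation games in \cite{CDHR07,CD12,CCT13,Reif84}. Given such a two-player partial-observation safety game where player~1 has partial observation and player~2 has perfect observation, I would simply view it as a three-player game in which player~3 is present but has no meaningful choices (a trivial, single-action or dummy player, or a player whose states are absent), so that $\exists\straa\,\exists\strac\,\forall\strab$ collapses to $\exists\straa\,\forall\strab$. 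Correctness is immediate: the three-player sure-winning problem restricted to games with an inert player~3 is literally the two-player partial-observation sure-winning problem, which is EXPTIME-hard; and all of safety, reachability, and parity inherit this.

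\medskip\noindent\textbf{Main obstacle.}
The routine part is the hardness direction; the delicate part is getting the tree-automaton reduction exactly right. The subtlety is the interaction between player~1's \emph{partial} observation and the \emph{perfect} observation of players~2 and~3: the input tree must be branching over player-1 \emph{observations} so that player-1's strategy is forced to be observation-based, yet the automaton must still track the precise current state of $G$ (not just its observation) so that the conjunctive/disjunctive choices of players~2 and~3 and the parity priorities are evaluated on the true play. Reconciling these — letting the automaton's control state remember the exact state while the tree direction is chosen per-observation, and handling the fixed $k$-step turn schedule so that player-1 moves are read at the correct depths and each player-1 node has a well-defined unique direction determined by the input label — is where the care is needed, together with verifying that a satisfying run of $\A$ indeed encodes a single player-3 strategy uniform over all player-2 strategies (this is exactly why player~3's choices are existential/disjunctive and must be resolved by the run, while player~2's are universal/conjunctive). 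Once this correspondence is set up cleanly, the complexity bounds follow from the standard EXPTIME bound for alternating parity tree automaton emptiness.
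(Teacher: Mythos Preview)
Your proposal is correct and follows essentially the same approach as the paper: EXPTIME-hardness by specializing to two-player partial-observation safety/reachability games (player~3 made trivial), and EXPTIME-membership by encoding player-1 strategies as $A_1$-labeled trees of rank $\abs{\Obs}$ and building an alternating parity tree automaton whose disjunctive choices model player~3 and whose conjunctive choices model player~2, with the automaton's control state tracking the exact game state while tree directions are observations. The only cosmetic difference is that the paper works directly in the concurrent model, so the transition relation is written uniformly as $\delta'(s,a_1)=\bigvee_{a_3}\bigwedge_{a_2}\tuple{\delta(s,a_1,a_2,a_3),\obs(\delta(s,a_1,a_2,a_3))}$ rather than casing on whose turn it is; this sidesteps the bookkeeping you flag about consuming the input letter only at player-1 depths, but the underlying idea and the correctness argument are the same.
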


\begin{proof}
The EXPTIME-hardness follows from EXPTIME-hardness of two-player partial-observation
games with reachability objective~\cite{Reif84,CDHR07} and safety objective~\cite{BD08}.

We prove membership in EXPTIME by a reduction to the emptiness problem for alternating
tree automata, which is solvable in EXPTIME for parity objectives~\cite{MSS86,MS87,MS95}.
The reduction is as follows. Given a game $G = \tuple{S, s_0, \delta,\Obs,\obs}$ 
over alphabet of actions $A_i$ ($i=1,2,3$), 
%%and observations $\Obs_1 \subseteq 2^S$ for player~$1$ (defining the function $\obs_1: S \to \Obs_1$), 
we construct the alternating tree automaton $\A = \tuple{S',s'_0,\delta'}$ over alphabet $\Alphabet$
and parameter $k = \abs{\Obs}$ (we assume that $\Obs = [k]$) where:
\begin{itemize}
\item $S' = S$, and $s'_0 = s_0$;
\item $\Alphabet = A_1$;
\item $\delta'$ is defined by $\delta'(s,a_1) = \bigvee_{a_3 \in A_3} \bigwedge_{a_2 \in A_2} \tuple{\delta(s,a_1,a_2,a_3), \obs(\delta(s,a_1,a_2,a_3))}$
for all $s \in S$ and $a_1 \in \Alphabet$.
%\item $\dir = \obs$ (strictly speaking, assuming observations in $\Obs$ are
%numbered $0,\dots, k-1$, for each $s \in S$ the direction $\dir(s)$ is the 
%number of the observation $\obs(s)$).
\end{itemize}

The acceptance condition $\varphi$ of the automaton is the same as the objective of the game~$G$.
We prove that $\exists \straa \in \Straa \cdot \exists \strac \in \Strac
\cdot \forall \strab \in \Strab: \rho^{\straa,\strab,\strac}_{s_0} \in \varphi$
if and only if $L_k(\A) \neq \emptyset$. 
We use the following notation. Given a node $y = d_1 d_2 \dots d_n$ in a $(T \times S)$-labeled
tree $(T_r,r)$, consider the prefixes $y_0 = \varepsilon$, and $y_i = d_1 d_2 \dots d_i$ (for $i=1,\dots,n$).
Let $\ov{r}_2(y) = s_0 s_1 \dots s_n$ where $r(y_i) = (\cdot, s_i)$ for $0 \leq i \leq n$, denote 
the corresponding state sequence of $y$.

\begin{enumerate}
\item \emph{Sure winning implies non-emptiness.}
First, assume that for some $\straa \in \Straa$ and $\strac \in \Strac$,
we have $\forall \strab \in \Strab: \rho^{\straa,\strab,\strac}_{s_0} \in \varphi$.
From $\straa$, we define an input tree $(T,V)$ where $T = [k]^*$
and $V(\gamma) = \straa(\obs(s_0)\cdot \gamma)$ for all $\gamma \in T$ (we view
$\straa$ as a function $[k]^+ \to \Alphabet$, since $[k] = \Obs$ and $\Alphabet = A_1$).
From $\strac$, we define a $(T \times S)$-labeled tree $(T_r,r)$ such that
$r(\varepsilon) = (\varepsilon,s_0)$ and for all $y \in T_r$, if $r(y) = (x,s)$ and $\ov{r}_2(y) = \rho$,
then for $a_1 = \straa(\obs(s_0) \cdot x) = V(x)$, for $a_3 = \strac(s_0 \cdot \rho)$,
for every $s'$ in the set $Q = \{s' \mid \exists a_2 \in A_2: s' = \delta(s,a_1,a_2,a_3)\}$,
there is a successor $y \cdot d$ of $y$ in $T_r$ labeled by $r(y \cdot d) = (x \cdot \obs(s'), s')$.
Note that $\{\tuple{s',\obs(s')} \mid s' \in Q\}$ is a satisfying assignment for $\delta'(s, a_1)$ 
and $a_1 = V(x)$, hence 
$(T_r,r)$ is a run of $\A$ over $(T,V)$. For every infinite path $\rho$ in $(T_r,r)$, consider
a strategy $\strab \in \Strab$ consistent with $\rho$. Then $\rho = \rho^{\straa,\strab,\strac}_{s_0}$,
hence $\rho \in \varphi$ and the run $(T_r,r)$ is accepting, showing that $L_k(\A) \neq \emptyset$.

\item \emph{Non-emptiness implies sure winning.}
Second, assume that $L_k(\A) \neq \emptyset$. 
Let $(T,V) \in L_k(\A)$ and $(T_r,r)$ be an accepting run of $\A$ over $(T,V)$.
From $(T,V)$, define a strategy $\straa$ of player~$1$ such that 
$\straa(s_0 \cdot \rho) = V(\obs(\rho))$ for all $\rho \in S^*$. 
Note that $\straa$ is indeed observation-based. 
%since $\straa(\rho) = \straa(\rho')$ for all $\rho,\rho' \in S^+$ such that $\obs(\rho) = \obs(\rho')$.
From $(T_r,r)$, we know that for all nodes $y \in T_r$ with $r(y) = (x, s)$
and $\ov{r}_2(y) = \rho$,
the set $Q = \{\tuple{s',d'} \mid \exists d \in \nat: r(y\cdot d) = (x \cdot d', s')\}$ 
is a satisfying assignment of $\delta'(s, V(x))$, hence there exists 
$a_3 \in A_3$ such that for all $a_2 \in A_2$, there is a successor of $y$ 
labeled by $(x \cdot \obs(s'), s')$ with $s' = \delta(s,a_1,a_2,a_3)$ and $a_1 = \straa(s_0 \cdot \rho)$.
Then define $\strac(s_0 \cdot \rho) = a_3$.
Now, for all strategies $\strab \in \Strab$ the outcome $\rho^{\straa,\strab,\strac}_{s_0}$
is a path in $(T_r,r)$, and hence $\rho^{\straa,\strab,\strac}_{s_0} \in \varphi$.
Therefore $\exists \straa \in \Straa \cdot \exists \strac \in \Strac \cdot
\forall \strab \in \Strab: \rho^{\straa,\strab,\strac}_{s_0} \in \varphi$.
\end{enumerate}
The desired result follows.
\qed
\end{proof}

The nonemptiness problem for an alternating tree automaton $\A$ with 
parity condition can be solved by constructing an equivalent 
nondeterministic parity tree automaton $\N$ (such that $L_k(\A) = 
L_k(\N)$), and then checking emptiness of $\N$. The construction proceeds 
as follows~\cite{MS95}. The nondeterministic automaton $\N$ guess a 
labeling of the input tree with a memoryless strategy for the alternating 
automaton $\A$. As $\A$ has $n$ states and $k$ directions, there are $(k^n)$ 
possible strategies. A nondeterministic parity word automaton with $n$ 
states and $d$ priorities can check that the strategy works along every 
branch of the tree. An equivalent deterministic parity word automaton can 
be constructed with $(n^n)$ states and $O(d\cdot n)$ priorities~\cite{CZ13}. 
Thus, $\N$ can guess the strategy labeling and check the strategies with 
$O((k\cdot n)^n)$ states and $O(d\cdot n)$ priorities. The nonemptiness of $\N$ 
can then be checked by considering it as a 
(two-player perfect-information deterministic)
parity game with $O((k\cdot n)^n)$ 
states and $O(d\cdot n)$ priorities~\cite{GH82}. This games can be solved in 
time $O((k\cdot n)^{d\cdot n^2})$~\cite{EJ91}. Moreover, since memoryless strategies 
exist for parity games~\cite{EJ91}, if the nondeterministic parity tree 
automaton is nonempty, then it accepts a regular tree that can be encoded 
by a transducer with $((k\cdot n)^n)$ states. Thus, the nonemptiness problem 
for alternating tree automaton with parity condition can be decided in 
exponential time, and there exists a transducer to witness nonemptiness that 
has exponentially many states.

\begin{comment}
%%% BEFORE MOSHE EDIT
The emptiness problem for an alternating tree automaton $\A$ with parity condition
can be solved by constructing an equivalent nondeterministic parity tree automaton 
$\N$ (such that $L_k(\A) = L_k(\N)$), and then checking emptiness of $\N$.
By the result of~\cite[Theorem~1.2]{MS95} for binary trees, if $\A$ has 
$n$ states and $d$ priorities, then $\N$ has $n' = 2^{O(d \cdot n \cdot \log n)}$ states
and $e = O(d \cdot n \cdot \log n)$ priorities.
Note that the emptiness problem for input trees of rank~$k$ is equivalent to the emptiness
problem for binary trees in an alternating automaton with $k \cdot n$ states.
Finally, the emptiness of a nondeterministic parity tree automaton with $m$ transitions
and $e$ priorities is equivalent to solving a two-player parity game~\cite{EJ91}, 
which can be done in time $m^{O(e)}$~\cite{Zielonka98}. Moreover, since memoryless 
strategies exist for parity games~\cite{EJ91}, if the nondeterministic parity tree 
automaton is nonempty, then it accepts a regular tree that can be encoded by
a transducer of size $m$. Since $m$ is at most quadratic in the size
of the state space, the emptiness problem for alternating 
tree automaton with parity condition can be solved in time 
$2^{O((d \cdot k \cdot n \cdot \log k \cdot n)^2)}$, and it is sufficient
to consider input trees encoded by transducers of size $2^{O(d \cdot k \cdot n \cdot \log k \cdot n)}$.
\end{comment}

\begin{theorem}
Given a three-player game $G = \tuple{S, s_0, \delta,\Obs,\obs}$ with $n$ states (and $k \leq n$ observations for player~$1$) 
and parity objective $\varphi$ defined by $d$ priorities, 
the problem of deciding whether 
$$\exists \straa \in \Straa  \cdot \exists \strac \in \Strac \cdot \forall \strab \in \Strab: \rho^{\straa,\strab,\strac}_{s_0} \in \varphi$$
can be solved in time exponential time.
%%PRCISE FORMULA $2^{O(d^2 \cdot n^4 \cdot \log^2 n)}$. 
Moreover, memory of exponential size 
%%PRECISE EXPRESSION $2^{O(d \cdot n^2 \cdot \log n)}$ 
is sufficient for player~$1$.
\end{theorem}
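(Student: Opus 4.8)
The plan is to make the EXPTIME-completeness of the preceding theorem quantitative by following the explicit alternating-to-nondeterministic tree-automaton pipeline and tracking the state and priority counts. First I would invoke the reduction of the previous theorem: from the three-player game $G$ with $n$ states, $k \leq n$ observations, and $d$ priorities, it builds an alternating parity tree automaton $\A = \tuple{S',s'_0,\delta'}$ over alphabet $\Alphabet = A_1$, of rank $k = \abs{\Obs}$, with $\abs{S'} = n$ states and parity acceptance condition using $d$ priorities, such that the sure-winning question for $G$ is equivalent to $L_k(\A) \neq \emptyset$. Hence it suffices to bound both the time to decide emptiness of $\A$ and the size of a witness to nonemptiness.

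Second I would apply the simulation theorem converting $\A$ into an equivalent nondeterministic parity tree automaton $\N$ with $L_k(\A) = L_k(\N)$. The construction guesses at each node a memoryless strategy for $\A$; since $\A$ has $n$ states and $k$ directions there are at most $k^n$ such local strategies, and correctness of a guessed labeling along a single branch is a condition checkable by a nondeterministic parity word automaton with $n$ states and $d$ priorities, which determinizes to $n^n$ states and $O(d\cdot n)$ priorities. Combining the two, $\N$ has $O((k\cdot n)^n)$ states and $O(d\cdot n)$ priorities. I would then reduce emptiness of $\N$ to solving a two-player perfect-information deterministic parity game with $O((k\cdot n)^n)$ vertices and $O(d\cdot n)$ priorities, solvable in time $O((k\cdot n)^{d\cdot n^2})$; since $k \leq n$ and $d$ is polynomially bounded in the input, this is exponential in the size of $G$, yielding the claimed time bound.

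Third, for the memory bound I would use that parity games admit memoryless winning strategies. Consequently, whenever $\N$ is nonempty it accepts a regular tree, i.e.\ one encoded by a finite-state transducer whose number of states is bounded by the number of states of $\N$, namely $O((k\cdot n)^n)$, which is exponential in the size of $G$. Unwinding the equivalences back through $\A$ and the reduction of the previous theorem, this regular accepted tree corresponds exactly to a finite-memory observation-based strategy $\straa \in \Straa$ for player~1 witnessing $\exists \straa \cdot \exists \strac \cdot \forall \strab: \rho^{\straa,\strab,\strac}_{s_0} \in \varphi$, with the player-3 choices absorbed into the satisfying-assignment (disjunctive) structure of $\delta'$. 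Its memory size equals the number of transducer states and is therefore exponential, establishing the moreover-clause.

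The main obstacle is the parameter accounting needed to keep the bound genuinely exponential: I must ensure the branch-checking word automaton determinizes with only an $n^n$ state blowup while keeping the priority count at $O(d\cdot n)$, and that this index transfers faithfully from $G$ through $\A$ to $\N$. The exponential-time claim rests on emptiness of a parity game with $N$ vertices and $p$ priorities being decidable in time $N^{O(p)}$, so plugging $N = O((k\cdot n)^n)$ and $p = O(d\cdot n)$ gives a bound exponential in $n\log(kn)$ and hence in the input size; an error in the priority count, or a failure to verify $k \leq n$, would break this estimate. The remaining steps are routine bookkeeping once the cited simulation and determinization bounds are pinned down.
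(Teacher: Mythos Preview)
Your proposal is correct and follows essentially the same route as the paper: invoke the reduction of the previous theorem to an alternating parity tree automaton $\A$ with $n$ states, $k$ directions, and $d$ priorities; apply the simulation theorem to obtain a nondeterministic parity tree automaton $\N$ with $O((k\cdot n)^n)$ states and $O(d\cdot n)$ priorities by guessing a memoryless strategy label ($k^n$ choices) and determinizing the branch-checking word automaton to $n^n$ states; solve emptiness of $\N$ as a parity game in time $O((k\cdot n)^{d\cdot n^2})$; and extract the exponential memory bound from the memoryless-strategy property of parity games, yielding a regular witness tree encoded by a transducer with $O((k\cdot n)^n)$ states. The parameter tracking and the justification of the memory bound via the regular accepted tree match the paper's argument point by point.
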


\begin{remark}
%Note that we considered the problem of deciding whether 
%$$\exists \straa \in \Straa \cdot \forall \strab \in \Strab \cdot
%\exists \strac \in \Strac: \rho^{\straa,\strab,\strac}_{s_0} \in \varphi$$
%which is equivalent to 
%$$\exists \straa \in \Straa \cdot \exists \strac \in \Strac \cdot 
%\forall \strab \in \Strab: \rho^{\straa,\strab,\strac}_{s_0} \in \varphi$$
%because once the strategy for player~1 is given we have a perfect-observation 
%game where we can switch the quantifiers of strategies due to determinacy.
%Second, 
By our reduction to alternating parity tree automata and the fact that if
an alternating parity tree automaton is non-empty, there is a regular witness
tree for non-emptiness it follows that strategies for player~1 can be restricted
to finite-memory without loss of generality.
This ensures that we can solve the problem of the existence of 
finite-memory almost-sure winning (resp. positive winning) strategies in 
partial-observation stochastic parity games (by Theorem~\ref{thrm:reduction_complexity} 
of Section~\ref{sec:partial_stoch}) also in EXPTIME,
and EXPTIME-completeness of the problem follows since the problem 
is EXPTIME-hard even for reachability objectives for almost-sure winning~\cite{CDHR07}
and safety objectives for positive winning~\cite{CDH10a}.
\end{remark}

\begin{theorem}\label{thrm:complexity}
Given a partial-observation stochastic game and a parity objective $\varphi$ defined by $d$ priorities, 
the problem of deciding whether there exists a finite-memory almost-sure (resp. positive) winning 
strategy for player~1 
%(i)~
is EXPTIME-complete.
%%PRECISE EXPRESSION (ii)~can be decided time $2^{O((d+1)^2 \cdot (n\cdot d)^4 \cdot \log^2 (n\cdot d) )}$. 
Moreover, if there is an almost-sure (resp. positive) winning strategy, then 
there exists one that uses memory of at most exponential size.
%% $2^{O(d \cdot (n\cdot d)^2 \cdot \log (n\cdot d))}$.
\end{theorem}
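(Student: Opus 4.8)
The plan is to assemble Theorem~\ref{thrm:complexity} directly from the two pillars already built in the excerpt, so almost no new work is needed beyond bookkeeping. First I would recall that Theorem~\ref{thrm:reduction_complexity} (and the companion gadget for positive winning sketched in the remark) gives a polynomial-time reduction from the qualitative-analysis problems for a partial-observation stochastic parity game $G$ with priority function $p$ to the sure-winning problem in the three-player parity game $\ov{G}=\tras(G)$, whose size is $O(n\cdot d)$ states, $O(m\cdot d)$ transitions, and $d+1$ priorities, with player~1 having partial observation and players~2,~3 having perfect observation. Then I would invoke the immediately preceding theorem on three-player games: the sure-winning problem for such games with a parity objective is decidable in exponential time (via the alternating parity tree automaton $\A$ with $\abs{\Obs}$ directions, whose nonemptiness reduces to a parity game of size $O((k\cdot n)^n)$ with $O(d\cdot n)$ priorities, solvable in time $O((k\cdot n)^{d\cdot n^2})$). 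Composing a polynomial blow-up with an exponential-time procedure stays in EXPTIME, which gives the upper bound.

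For the lower bound I would cite, exactly as the remark before the theorem does, that the almost-sure winning problem is already EXPTIME-hard for the special case of reachability objectives~\cite{CDHR07}, and the positive winning problem is EXPTIME-hard already for safety objectives~\cite{CDH10a}; since reachability and safety are special cases of parity with a constant number of priorities, EXPTIME-hardness transfers verbatim to parity objectives. Together with the membership argument this yields EXPTIME-completeness for both the almost-sure and positive variants.

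For the memory bound I would use the regularity of witnesses for alternating parity tree automaton nonemptiness: if $L_k(\A)\neq\emptyset$ then $\A$ accepts a regular tree encoded by a transducer with $((k\cdot n)^n)$ states, and by the correctness direction ``non-emptiness implies sure winning'' in the three-player theorem this transducer yields a finite-memory observation-based strategy $\ov{\straa}$ for player~1 in $\ov{G}$ of exponential size. Pulling this back through the strategy mapping $\tras$ of Section~\ref{sec:partial_stoch} (which does not increase memory, since $\ov{h}$ is a bijection on observation sequences) produces a finite-memory almost-sure (resp. positive) winning strategy for player~1 in $G$ of exponential size; this matches the known exponential lower bound for the simpler reachability/safety cases, so the bound is optimal. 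I would also note, as in the introduction, that once player~1's finite-memory strategy is fixed the residual game for player~2 is a finite perfect-information MDP on $G\times\mem$, where finite memory suffices~\cite{CY95}, so the result is insensitive to whether player~2 uses finite or infinite memory.

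The main obstacle is not really a mathematical one here — all the substance lives in Lemma~\ref{lemm:par-reduction1} and the three-player/tree-automaton theorem — but the one point requiring care is making the composition honest: I must check that the finite-memory restriction is respected on both ends of the reduction. Specifically, Theorem~\ref{thrm:reduction_complexity} is stated as an equivalence between \emph{finite-memory} almost-sure winning in $G$ and the existence of a \emph{finite-memory} player-1 strategy $\ov{\straa}$ making player~3 sure-winning in $\ov{G}_{\ov{\straa}}$, and the three-player theorem in turn guarantees that whenever such an $\ov{\straa}$ exists at all, a finite-memory one of exponential size exists; chaining these two finiteness statements is what makes the exponential memory bound for $G$ come out, and I would make sure the quantifier order in that chain is spelled out so no infinite-memory loophole sneaks in.
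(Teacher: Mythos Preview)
Your proposal is correct and follows essentially the same approach as the paper: the theorem is obtained by composing the polynomial reduction of Theorem~\ref{thrm:reduction_complexity} with the EXPTIME decision procedure (and exponential-size regular witness) for three-player parity games via alternating tree automata, and the lower bounds are inherited from the reachability/safety special cases~\cite{CDHR07,CDH10a}. Your care about the finite-memory quantifier chain and the insensitivity to player~2's memory is appropriate and matches what the paper notes (more tersely) in the remark preceding the theorem.
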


\begin{remark}
As mentioned in Remark~\ref{remark:strategies} the EXPTIME upper bound for
qualitative analysis of partial-observation stochastic parity games with 
finite-memory randomized strategies follows from Theorem~\ref{thrm:complexity}.
The EXPTIME lower bound and the exponential lower bound on memory requirement 
for finite-memory randomized strategies follows from the results 
of~\cite{CDHR07,CDH10a} for reachability and safety objectives 
(even for POMDPs). 
\end{remark}

\clearpage
\bibliographystyle{plain}
\bibliography{biblio}

\end{document}